\renewcommand{\baselinestretch}{1.2}
\long\def\/*#1*/{}
\newtheorem{prethm}{{\bf Theorem}}
\newenvironment{thm}{\begin{prethm}{\hspace{-0.5
               em}{\bf.}}}{\end{prethm}}
\newtheorem{prepro}[prethm]{Proposition}
\newtheorem{prelem}[prethm]{Lemma}
\newenvironment{lem}{\begin{prelem}{\hspace{-0.5
               em}{\bf.}}}{\end{prelem}}
\newtheorem{precor}[prethm]{Corollary}
\newenvironment{cor}{\begin{precor}{\hspace{-0.5
               em}{\bf.}}}{\end{precor}}
\newtheorem{preremark}{{\bf Remark}}
\newenvironment{rem}{\begin{preremark}\em{\hspace{-0.5
              em}{\bf.}}}{\end{preremark}}
\newtheorem{preexample}{{\bf Example}}
\newtheorem{preque}{{\bf Question}}
\newtheorem{preproblem}{{\bf problem}}
\newtheorem{preproof}{{\bf Proof.}}
\newenvironment{proof}[1]{\begin{preproof}{\rm
               #1}\hfill{$\Box$}}{\end{preproof}}
\renewcommand{\thefootnote}
\begin{document}
\title{On the Decision Number of Graphs}
\author{S. Akbari$^{\,\rm a,c}$,~ M. Dalirrooyfard$^{\,\rm a,b}$,~ S. Davodpoor$^{\,\rm c}$,~ K. Ehsani$^{\,\rm b}$,~ R. Sherkati$^{\,\rm b}$\\
{\footnotesize {\em $^{\rm a}$Department of Mathematical Sciences, Sharif University of Technology,}}\\
{\footnotesize {\em P.O. Box 11155-9415,
Tehran, Iran}}\\
{\footnotesize {\em $^{\rm b}$Department of Computer Engineering, Sharif University of Technology,}}\\
{\footnotesize {\em P.O. Box 11155-9517,
Tehran, Iran}}\\
{\footnotesize {\em $^{\rm c}$School of Mathematics, Institute for Research in Fundamental Sciences (IPM),}}\\
{\footnotesize {\em P.O. Box 19395-5746, Tehran, Iran}}}
\footnotetext{E-mail Addresses: {\tt s\_akbari@sharif.edu}, {\tt mdalir\_rf@ee.sharif.edu}, {\tt sodavodpoor@gmail.com}, {\tt kehsani@ce.sharif.edu}, {\tt sherkati@ce.sharif.edu}}
\date{}
\maketitle


\begin{abstract}

 Let $G$ be a graph. A good function is a function $f:V(G)\rightarrow \{-1,1\}$, satisfying $f(N(v))\geq 1$, for each $v\in V(G)$, where $ N(v)=\{u\in V(G)\, |\, uv\in E(G) \} $ and $f(S) = \sum_{u\in S} f(u)$ for every $S \subseteq V(G) $. For every cubic graph $G$ of order $ n, $ we prove that $ \gamma(G) \leq \frac{5n}{7} $ and show that this inequality is sharp. A function $f:V(G)\rightarrow \{-1,1\}$ is called a nice function, if $f(N[v])\le1$, for each $v\in V(G)$, where $ N[v]=\{v\} \cup N(v) $. Define $\overline{\beta}(G)=max\{f(V(G))\}$, where $f$ is a nice function for $G$. We show that $\overline\beta(G)\ge -\frac{3n}{7}$ for every cubic graph $G$ of order $n$, which improves the best known bound $-\frac{n}{2}$.
\vspace{1mm} {\renewcommand{\baselinestretch}{1}
\parskip = 0 mm

\noindent{\small {\it 2010 AMS Subject Classification}: 05C05, 05C38, 05CC69, 05C78.}}\\
\noindent{\small {\it Keywords}: Bad decision number; Good decision number; Nice decision number; Excellent decision number; Trees; Cubic graphs. }


\end{abstract}


\section{Introduction}
All graphs considered in this paper are finite and simple. Let $ G $ be a graph
with the vertex set $V(G)$ and the edge set $E(G)$. We denote $|V(G)|$ and $|E(G)|$ the order and size of $G$, respectively. For $v\in V(G)$, the neighborhood and the closed neighborhood of $v$ are defined by $ N(v)=\{u\in V(G)\, | \, uv\in E(G) \} $ and $ N[v]=\{v\} \cup N(v) $, respectively. Denote $d_G(v)=|N(v)|$, for simplicity we use $d(v)$ instead of $d_G(v)$. For a graph $ G $, let $ \delta(G) $ and $\Delta(G)$ denote the minimum and the maximum degree of $G$, respectively.\\
In this paper, $P_n, C_n$ and $K_n$ denote the path, the cycle and the complete graph of order $n$, respectively. Also, $K_{m,n}$ denotes the complete bipartite graph with partition $U$ and $V$, where $|U| = m$ and $|V| = n$. Moreover, $d_G(u,v)$ is the distance between $u$ and $v$. For abbreviation, sometimes we use $d(u,v)$.

For a function $f:V(G)\rightarrow \{-1,1\}$ and $S\subseteq V(G)$, we define $f(S) = \sum_{u\in S} f(u)$. A function $f:V(G)\rightarrow \{-1,1\}$ is called a $\emph{bad function}$, if $f(N(v))\le 1$ for each $v\in V(G)$. The maximum value of $f(V(G))$, taken over all bad function $f$, is called the {\it bad decision number} of $G$, and denoted by $\beta(G)$. The bad decision number was introduced by Changping Wang in \cite{neg} as the $\emph{negative decision number}$. If the closed neighborhood is used in the above definition, $f$ is called a $\emph{nice function}$. The {\it nice decision number} of $G$, denoted by $\overline{\beta}(G)$, is the maximum value of $f(V(G))$, taken over all nice function $f$. A function $f:V(G)\rightarrow \{-1,1\}$ is called a $\emph{good function}$, if $f(N(v))\ge 1$, for each $v\in V(G)$. The minimum value of $f(V(G))$, taken over all good function $f$, is called the {\it good decision number} of $G$, and denoted by $\gamma(G)$. If the closed neighborhood is used in the above definition, $f$ is called an $\emph{excellent function}$. The {\it excellent decision number} of $G$, denoted by $\overline{\gamma}(G)$, is the minimum value of $f(V(G))$, taken over all nice function $f$.

A \it{rooted tree }\rm $T$ distinguishes one vertex $r$ which is called \it{root}\rm. For each vertex $r\neq v \in V(T)$, the \it{parent }\rm of $v$ is the neighbor of $v$ on the unique $r–v$ path, while a \it{child }\rm of $v$ is any other neighbor of $v$. A \it{vertex coloring }\rm of $G$ is a function $f:V(G)\longrightarrow C$, where $C$ is a set of colors. A vertex coloring is called {\it proper} if adjacent vertices of $G$ receive distinct colors. The \it{total dominating set }\rm $S$ is a set of vertices such that each $v\in V(G)$ (even those in $S$) has at least a neighbor in $S$. Let $\gamma_t(G)=min{|S|}$, where $S$ takes over all total dominating sets. 

In \cite{sub}, it was proved that for every tree $T$, $\beta(T) \geq 0$. Also in \cite{neg} Wang proved that if $G$ is a graph of order $n$ and $\delta(G)\geq 2$, then $\beta(G)\leq n+1- \sqrt{ 4n + 1}$, and this bound is sharp. Moreover, he showed that $ \beta(G)\leq \frac{1}{5}(4m-3n) $, for every graph $G$ of order $n$ and size $m$, where $\delta(G) \geq 2$. Wang also proved that if $G$ is a $k$-regular graph of order $n$, then 
$$
\beta(G) \leq  
\left\{
\begin{array}{lr}
0 & $if $ k$ is even$ \\
\frac{n}{k}& $ if $k$ is odd$,\\
\end{array}
\right.
$$
 and this upper bound is sharp. Furthermore, he showed that, for any positive integer $n\geq 3$,\\
\begin{center}
$
\beta(C_{n}) =  
\left\{
\begin{array}{lr}
0 &$ if $ \, n \, \equiv \, 0\,(mod\, 4) \\
-2&$ if $ \, n \, \equiv \, 2 \,(mod\, 4)\\
-1& $otherwise,$\\
\end{array}
\right.
$\,\,\,\,\,\,\,\,
 and\,\,\,\,\,\,\,\, 
$
\beta(P_{n}) =  
\left\{
\begin{array}{lr}
0 &$ if $ \, n \, \equiv \, 0\,(mod\, 4) \\
2& $if $ \, n \, \equiv \, 2 \,(mod\, 4)\\
1& $otherwise.$\\
\end{array}
\right.
$\\
\end{center}

In this paper, we prove that $\beta(T)\leq n - 2\lceil \frac{n+6}{10} \rceil $, and $\overline{\gamma}(T) \ge n - 2\lceil\frac{n-4}{3}\rceil $, where $T$ is a tree of order $n\geq 3$. Also we show that $\overline{ \beta}(T)\geq 0 $, for every tree $ T $. It is shown that for every cubic graph $G$ of order $ n $, $ \beta(G) \ge 0$, $ {\overline{\beta}}(G) \geq \frac{-3n}{7}$. We also prove that $\gamma(G)\leq\frac{5n}{7}$ for every cubic graph $G$ of order $n$. In this paper by a short proof we show that for every cubic graph $G$ of order $n$, except Petersen graph, $\overline{\gamma}(G)\le \frac{3n}{4}$.


\section{Bad Decision Number}
For every positive integer $n$, $a_n$ denotes $\max \beta(T)$, taken over all tree $T\in T_n$, where $T_n$ is the set of all trees of order $n$. In this section, we show that $a_n=n-2\lceil\frac{n+6}{10}\rceil$, for every positive integer $n \geq 3$, and we characterize all trees attain this value. We present a lower bound for bad decision number of a graph in terms of its size and order. Finally, we prove that for every cubic graph $ G $, $\beta(G)\ge 0$ and show that this lower bound is sharp.

\begin{lem}
\label{lemstep}
{
For every positive integer $n$, $\mathopen| a_{n+1} - a_n \mathclose| \le 1$.
}
\end{lem}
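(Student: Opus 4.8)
The plan is to prove the two one-sided estimates $a_n\le a_{n+1}+1$ and $a_{n+1}\le a_n+1$ separately, in each case transferring an extremal bad function between a tree and a tree with one more or one fewer vertex, obtained by attaching or deleting a single leaf.

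For $a_n\le a_{n+1}+1$, I would take a tree $T$ of order $n$ and a bad function $f$ with $f(V(T))=a_n$, pick any vertex $u$, attach a new leaf $v$ at $u$ to form a tree $T'$ of order $n+1$, and set $f(v)=-1$. The only neighbourhoods that change are $N(v)=\{u\}$, where $f(u)\le 1$ holds trivially, and $N(u)$, where $f(N_{T'}(u))=f(N_T(u))-1\le 0$; all other neighbourhoods and the values on them are unchanged. Hence $f$ is a bad function on $T'$, so $a_{n+1}\ge f(V(T'))=a_n-1$.

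For $a_{n+1}\le a_n+1$, I would take a tree $T'$ of order $n+1$ and a bad function $f$ with $f(V(T'))=a_{n+1}$, choose a leaf $v$ with unique neighbour $u$, and let $T=T'-v$. Deleting $v$ affects only $N(u)$, so the restriction of $f$ to $T$ is a bad function on $T$ unless $f(N_T(u))=f(N_{T'}(u))-f(v)\ge 2$. If $f(v)=1$ this cannot occur, and the restriction has weight $a_{n+1}-1$. If $f(v)=-1$, observe first that $f(N_{T'}(u))\ge 0$, since otherwise switching $f(v)$ to $1$ would still give a bad function on $T'$ (only $N(u)$ is affected, and $f(N_{T'}(u))+2\le 1$) of strictly larger weight, contradicting optimality; thus $f(N_{T'}(u))\in\{0,1\}$. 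If $f(N_{T'}(u))=0$, the restriction of $f$ to $T$ is already a bad function, of weight $a_{n+1}+1$. If $f(N_{T'}(u))=1$, then $f(N_{T'}(u)\setminus\{v\})=2$, so $u$ has a neighbour $w\ne v$ with $f(w)=1$; now delete $v$ and at the same time reset $f(w)=-1$. After this, $f(N_T(u))=0$; the set $N(w)$ and the $f$-values on it are unchanged, so the constraint at $w$ is preserved; the constraint at every other neighbour of $w$ only gets easier, since its neighbourhood now carries a $-1$ in place of a $+1$; and no further constraint is touched. In all cases we obtain a bad function on a tree of order $n$ with weight at least $a_{n+1}-1$, so $a_n\ge a_{n+1}-1$. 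Combining the two inequalities gives $|a_{n+1}-a_n|\le 1$.

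The two ``local'' checks -- that attaching a $-1$ leaf, or restricting after deleting a leaf, leaves the distant constraints intact -- are routine. The one delicate point is the last case: deleting a $-1$ leaf can create a violation at its neighbour $u$, and one must verify that flipping a single $+1$-neighbour of $u$ cures it without triggering a new violation, which works precisely because that flip can only decrease the sums over the remaining affected neighbourhoods. The small orders $n=1,2$, for which $\beta(K_1)=1$, are covered by the same argument and can also be checked by hand.
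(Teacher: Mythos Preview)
Your proof is correct and follows essentially the same approach as the paper's: both delete a leaf from an optimal tree of order $n+1$ and, when that leaf carries value $-1$, repair the constraint at its parent by turning a $+1$-neighbour of the parent into a $-1$ (the paper does this as a swap before deletion, you do it as a flip after deletion, which is the same net operation). The only cosmetic difference is that you spell out the easy direction $a_{n+1}\ge a_n-1$ by attaching a $-1$ leaf, whereas the paper dismisses it as obvious.
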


\begin{proof}
{It is clear that $a_n \equiv n \, \mbox{(mod 2)}$. Obviously, $a_{n+1}-a_n\ge -1$. Suppose that $T_{n+1}^\ast$ is a tree such that $\beta(T_{n+1}^\ast)=a_{n+1}$. Let $f$ be a bad function for $T^\ast_{n+1}$, where $f(V(T^\ast_{n+1})) = a_{n+1}$. Let $u$ be a pendant vertex in $T_{n+1}^\ast$ and $v$ be its neighbor. If $f(u) = 1$, then $T_{n+1}^\ast\backslash u$ is a tree of order $n$ and $ f $ is a bad function for this tree. Clearly, $\beta(T_{n+1}^\ast\backslash u)\ge a_{n+1} - 1$. Thus we obtain that $a_{n+1} - a_n \le 1$, and so we are done.\\ 
Now, suppose that $f(u) = -1$. There exists $w\in N(v)$ such that $f(w)=1$, because otherwise we change the value of $u$ to $1$ and obtain a contradiction. Now, change the values of $u$ and $w$ to $1$ and $-1$, respectively. Call this new function by $g$. Obviously, $ g $ is a bad function for $ T_{n+1}^\ast $ and $g(V(T^\ast_{n+1})) = a_{n+1}$. Now, by the previous case the proof is complete.}
\end{proof}

\begin{lem}
\label{lemtree}
{
 For a positive integer $k$, let $T$ be a tree with a bad function $f$ in which exactly $k$ vertices have value $-1$. Then $|V(T)| \le 10k-6$ and the equality holds, if and only if $T$ is constructed as follows:\\ \\
 Let $T^\prime$ be a tree of order $k$. For every vertex $v \in V(T^\prime)$, add a set of new vertices of size $d(v) + 1$ and join $v$ to all of them. Then for each vertex $w$ of this set, add two new vertices and join $w$ to these vertices.
}
\end{lem}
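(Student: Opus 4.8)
The plan is to set up a counting/discharging argument on the tree $T$ with its bad function $f$. Let $M = \{v : f(v) = -1\}$ with $|M| = k$, and $P = V(T)\setminus M$, so we must show $|P| \le 9k - 6$, i.e. $|V(T)| \le 10k-6$. First I would observe that no vertex of $T$ can have all its neighbors in $P$ unless that vertex lies in $M$... wait, the bad condition is $f(N(v)) \le 1$, so it constrains vertices with *many* positive neighbors. The right reading: for every $v$, the number of $+1$-neighbors exceeds the number of $-1$-neighbors by at most $1$; equivalently each $v$ has at most $\lfloor (d(v)+1)/2 \rfloor$ neighbors in $P$ and at least $\lfloor (d(v)-1)/2 \rfloor$ neighbors in $M$. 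In particular every vertex — including every leaf — has at least one neighbor in $M$ once its degree is $\ge 2$, and a leaf whose unique neighbor is in $P$ forces that neighbor to "pay" for it.

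The key structural step is to bound $|P|$ by charging each vertex of $P$ to a nearby vertex of $M$. I would root $T$ and argue as follows: every $p \in P$ is either adjacent to some $m \in M$, or (being possibly a leaf or a degree-2 vertex whose neighbors are all in $P$) lies on a short path to $M$. Using the neighbor inequality above, a vertex $m \in M$ of degree $d$ has at most $\lfloor(d+1)/2\rfloor$... no — since $f(m) = -1$, the constraint $f(N(m)) \le 1$ is automatically satisfied, so $m$ can have up to $d(m)$ positive neighbors, but $m$'s positive neighbors themselves each need a $-1$ in their neighborhood. The cleanest route: show that the subgraph $T[M]$, together with the requirement that every positive vertex and every leaf is "covered," forces that if we contract/delete appropriately we can take $T'$ to be a spanning tree on a set closely related to $M$. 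Concretely, I expect the extremal description to come from: each $v \in V(T')$ of degree $d_{T'}(v)$ contributes itself ($1$ vertex, in $M$), $d_{T'}(v)+1$ "middle" vertices, and $2(d_{T'}(v)+1)$ "leaf" vertices, for a total of $1 + 3(d(v)+1)$ per vertex of $T'$; summing over $T'$ (which has $k$ vertices and $k-1$ edges, so $\sum d_{T'}(v) = 2k-2$) gives $k + 3(2k-2) + 3k = 10k - 6$. So the proof should exhibit a surjection-type argument from $V(T)$ onto this budget and track exactly when every inequality is tight.

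The main obstacle will be the equality characterization: it is routine to get *some* linear bound, but pinning down that equality forces precisely the "double-star-of-stars" structure (root $T'$ on $k$ vertices; hang $d(v)+1$ children off each $v$; hang two leaves off each child) requires checking that every slack in the charging is zero. In particular I would need to argue (i) every vertex of $M$ has all its neighbors in $P$ (no two $-1$'s adjacent), (ii) the "middle" vertices have degree exactly $3$ with one neighbor in $M$ and two leaf-neighbors, forced by the bad inequality at the middle vertex ($f(N(w)) = -1 + 1 + 1 = 1$ is the unique tight configuration for degree $3$, while higher degree would need more $-1$'s, contradicting minimality of $M$ or the tree structure), and (iii) the count $d(v)+1$ of middle children at $v\in M$ is forced because the bad inequality at each such middle child already uses up $v$ as its unique $-1$ neighbor, and a global edge-count then squeezes the number of middle vertices to exactly $\sum(d_{T'}(v)+1) = 3k-2$. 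Assembling (i)–(iii) and verifying the arithmetic closes the equality case; I would present the bound first by induction on $k$ (peeling a leaf-subtree of $T'$ off, removing its $\le 9$ associated vertices, which is where Lemma~\ref{lemstep}-style bookkeeping may help) and then run the equality analysis separately.
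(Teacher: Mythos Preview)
Your equality analysis contains a genuine contradiction. You correctly compute that the extremal count comes from taking a tree $T'$ on the $k$ negative vertices and using $\sum_{v\in T'} d_{T'}(v)=2k-2$; this already presupposes that the vertices of $M$ span a tree and hence are mutually adjacent along $T'$. Yet in your item~(i) you assert that in the equality case ``no two $-1$'s are adjacent.'' These cannot both hold: if $M$ were independent, then for each $m\in M$ the bad constraint $f(N(m))\le 1$ would force $d(m)\le 1$, and the whole count collapses far below $10k-6$. So (i) is simply false, and with it your items (ii)--(iii), which rely on each middle vertex having a \emph{unique} negative neighbor, lose their footing: a priori a positive vertex can be adjacent to several vertices of $M$ lying in different components of $T[M]$ without creating a cycle.

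What you are missing is exactly the structural lemma the paper isolates: in an extremal tree $T^\ast$ (one of maximum order among trees admitting a bad function with exactly $k$ negatives), the induced subgraph $F=T^\ast[M]$ is \emph{connected}. The paper proves this by a direct modification: if $F$ were disconnected, one can reroute an edge on the path between two components of $F$ and then attach a fresh $+1$ leaf, producing a strictly larger tree still carrying a bad function with $k$ negatives. Once $F$ is known to be connected, every positive vertex has at most one neighbor in $F$ (else a cycle through $F$), each such ``middle'' vertex has at most two further positive neighbors which must be pendant, and the count $|V(T^\ast)|\le k+\sum_{u\in F}(d_F(u)+1)+2\sum_{u\in F}(d_F(u)+1)=k+3(2k-2+k)=10k-6$ drops out immediately, with equality forcing precisely the described construction. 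Your charging/induction outline never supplies a substitute for this connectivity step, and the induction ``peel a leaf-subtree of $T'$'' is circular since $T'$ is what you are trying to produce.
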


\begin{proof}
{Consider a tree constructed as given in the statement of lemma. Assign the value $-1$ to the vertices of $T'$ and assign the value $1$ to other vertices. Call this function by $f$. Clearly, $f$ is a bad function.\\ 
Now, consider a tree $T^\ast$ of maximum order which satisfies the hypothesis of the lemma. Let $f$ be a bad function for this tree. Call the induced subgraph on the vertices of $T^\ast$ with value $-1$ by $F$. We prove that $F$ is connected. By contradiction, assume that $F$ has at least two connected components. Let $u$ and $v$ be two vertices in different components of $F$. Consider the path between $u$ and $v$ and call it by $P$. There exists a vertex $w\in V(P)$ such that $d(u,w)=min\,d(u,x)$, where $x\in V(P)$ and $f(x)=1$. Let $w'\in V(P)$, such that $d(u,w')=d(u,w)-1$. Obviously, $f(w')=-1$. Now, remove the edge $w'w$ and join $u$ to $v$. If there exists $t \in {N(w) \setminus V(P)}$ and $f(t)=1$, then remove the edge $tw$ and join $w'$ to $t$, otherwise change nothing.  It is not hard to see that $f$ is a bad function for this new graph, which is clearly a tree. Note that $f(N(v))$ has decreased by $1$. Now, add a vertex $z$ to this tree and join it to $v$. Call this tree by $T'$. Consider a function $g$ for $T'$ as follows:
$$
g(x)= \left\{
\begin{array}{lll}
f(x)&&\mbox{if $x\neq z$}\\
1&&\mbox{if $x=z.$}\\
\end{array}
\right.
$$ 
It is clear that $g$ is a bad function for $T'$. This contradicts the maximality of $T^\ast$. So $F$ is connected.

Clearly, each vertex $u\in F$ has at most $d_F(u) + 1$ neighbors with value $1$. Each vertex with value $1$ in $T^\ast$ has at most one neighbor with value $-1$, otherwise we have a cycle. Therefore, each vertex with value $1$ which has a neighbor in $F$, has at most two other neighbors with value $1$. Indeed, these two neighbors are pendant, otherwise $T^*$ has a cycle. Therefore, the following holds:
$$
|V(T^*)| \le k + \sum_{u \in F}{(d_F(u)+1)} + 2 \sum_{u \in F}{(d_F(u)+1)} = k + 3 (2k - 2 + k) = 10k - 6
.$$
It is clear that if $|V(T^*)|=10k-6,$ then $T^*$ is constructed as the pattern discussed in lemma. So the proof is complete. 
}
\end{proof}

In the next theorem, we determine the exact value of $a_n$.
\begin{thm} 
{
For every positive integer $n \ge 3$, $a_n = n - 2\lceil \frac{n+6}{10} \rceil $.
}
\end{thm}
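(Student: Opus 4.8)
The plan is to prove the two inequalities $a_n \le n - 2\lceil\frac{n+6}{10}\rceil$ and $a_n \ge n - 2\lceil\frac{n+6}{10}\rceil$ and combine them. For the upper bound, let $T \in T_n$ with $n \ge 3$ and let $f$ be a bad function on $T$ having exactly $k$ vertices of value $-1$, so that $f(V(T)) = (n-k) - k = n - 2k$. Since $n \ge 3$ gives $\Delta(T) \ge 2$, and a vertex of degree at least $2$ would violate $f(N(v)) \le 1$ if $f$ were identically $1$, the function $f$ must take the value $-1$ somewhere, that is, $k \ge 1$; hence Lemma \ref{lemtree} applies and yields $n \le 10k - 6$. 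Thus $k \ge \frac{n+6}{10}$, and since $k$ is an integer, $k \ge \lceil\frac{n+6}{10}\rceil$, so $f(V(T)) \le n - 2\lceil\frac{n+6}{10}\rceil$. Taking the maximum over all $T \in T_n$ gives $a_n \le n - 2\lceil\frac{n+6}{10}\rceil$.

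For the reverse inequality I would first treat the peak orders $n = 10k - 6$ with $k \ge 1$, at which $\lceil\frac{n+6}{10}\rceil = k$. Applying the construction described in the statement of Lemma \ref{lemtree} to an arbitrary tree $T'$ of order $k$ produces a tree of order exactly $10k - 6$ on which the function equal to $-1$ on $V(T')$ and to $1$ elsewhere is a bad function with exactly $k$ vertices of value $-1$; hence $a_{10k-6} \ge (10k-6) - 2k = 8k - 6$, and together with the upper bound this forces $a_{10k-6} = 8k - 6$. For a general $n \ge 3$, set $k = \lceil\frac{n+6}{10}\rceil$; the defining property of the ceiling gives $10k - 15 \le n \le 10k - 6$, so $j := 10k - 6 - n$ satisfies $0 \le j \le 9$ and $n + j = 10k - 6 \ge 4$. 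Iterating the inequality $a_m \ge a_{m+1} - 1$ (an immediate consequence of Lemma \ref{lemstep}) a total of $j$ times (every index that occurs is at least $n \ge 3$) gives $a_n \ge a_{10k-6} - j = (8k-6) - (10k-6-n) = n - 2k = n - 2\lceil\frac{n+6}{10}\rceil$. Combined with the upper bound, $a_n = n - 2\lceil\frac{n+6}{10}\rceil$ for every $n \ge 3$.

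Given Lemmas \ref{lemstep} and \ref{lemtree}, the argument is essentially bookkeeping, and I do not anticipate any genuine obstacle. The two points that deserve care are the observation that a bad function on a tree of order at least $3$ must use the value $-1$ (so that Lemma \ref{lemtree}, which assumes $k \ge 1$, is applicable in the first place), and the elementary ceiling arithmetic showing that each $n \ge 3$ lies at distance at most $9$ below the nearest peak $10k - 6$, while the descent through Lemma \ref{lemstep} never reaches the exceptional small orders $n \le 2$, for which the stated formula is false.
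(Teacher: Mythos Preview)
Your argument is correct and rests on the same two lemmas as the paper's proof. The organization differs slightly: you prove the upper bound $a_n\le n-2\lceil\frac{n+6}{10}\rceil$ for every $n\ge 3$ directly from Lemma~\ref{lemtree} (after the observation $k\ge 1$), pin down the single anchor $a_{10k-6}=8k-6$, and then descend via Lemma~\ref{lemstep}; the paper instead fixes two consecutive anchors $a_{10k-6}$ and $a_{10k-5}$ and fills in the remaining nine orders by noting that the total increase between anchors equals the number of steps, forcing each step to be $+1$. Your version is marginally cleaner since it needs only one anchor and treats the upper bound uniformly, while the paper's forward interpolation avoids the small check that the backward descent never leaves the range $n\ge 3$.
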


\begin{proof}
{Assume that $n = 10k-6 $, for some positive integer $k $, and $T $ is a tree of order $n$ for which $ \beta(T) = a_n$. So by Lemma \ref{lemtree}, every bad function for $ T $ assigns $-1$ to at least $ k $ vertices. Hence, the assertion is proved for $ n = 10k-6$.
Now, suppose that $ n = 10k-5 $, for a positive integer $ k $ and $ T $ is a tree for which $ \beta(T)=a_n$. By Lemma~\ref{lemtree}, every bad function for $ T $ assigns $-1$ to at least $ k + 1 $ vertices. So, $ \beta(T) \le 10k - 5 - 2(k + 1)$. Let $T'$ be a tree of order $10k-6$, where $\beta(T') = 8k-6$.
Add a vertex with value $ -1 $ and join it to an arbitrary vertex of $ T $. Clearly, the bad decision number of this tree is $8k -7 = n - 2\lceil \frac{n+6}{10} \rceil $.\\
Now, since for each positive integer $k$, $a_{(10(k+1)-6)}-a_{(10k-5)}=9$, by Lemma $\ref{lemstep}$, we obtain that for each $10k-5\leq n< 10(k+1)-6$, $a_{n+1}-a_{n}=1$. So the proof is complete.
}
\end{proof}

\begin{thm}
{
For every connected graph $G$, $ \beta(G) \ge |V(G)| - |E(G)| - 1$.
}
\end{thm}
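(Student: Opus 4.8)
The natural plan is to induct on the cyclomatic number $c(G) := |E(G)| - |V(G)| + 1$, i.e.\ on the number of independent cycles of $G$. When $c(G) = 0$ the graph $G$ is a tree, the inequality to be proved reads $\beta(G) \ge 0$, and this is exactly the result of \cite{sub} recalled in the introduction. So suppose $c(G) \ge 1$; then $G$ has a cycle, and deleting an edge $e = uv$ of that cycle yields a connected graph $G' = G - e$ with $c(G') = c(G) - 1$. By the induction hypothesis there is a bad function $f$ for $G'$ with $f(V(G')) = \beta(G') \ge |V(G)| - |E(G)|$, so it suffices to turn $f$ into a bad function for $G$ whose total value is at least $|V(G)| - |E(G)| - 1$; that is, to repair $f$ while losing at most $1$.

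Re-inserting $e$ changes only $N_G(u) = N_{G'}(u) \cup \{v\}$ and $N_G(v) = N_{G'}(v) \cup \{u\}$, so $f$ can violate the bad-function condition of $G$ only at $u$ or at $v$; and since $f$ is bad for $G'$, a violation at $u$ forces $f(N_{G'}(u)) = 1$ and $f(v) = 1$ (symmetrically at $v$). If there is no violation, $f$ is already bad for $G$ and nothing is lost. If there is a violation at exactly one endpoint, say $u$, then $f(v) = 1$, and switching the value of $v$ to $-1$ drops $f(N_G(u))$ by $2$ (killing that violation), leaves the condition at $v$ untouched since it does not involve $f(v)$, and can only lower every other neighbourhood sum, so the modified function is bad for $G$. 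The convenient general principle here is that once every retoggled vertex is moved from $+1$ to $-1$, all neighbourhood sums only decrease, so a repair never creates a fresh violation; the only thing to watch is the weight lost.

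The quantitative bookkeeping is where the difficulty sits, and I expect it to be the main obstacle. A single switch as above already costs $2$ in $f(V)$, and a \emph{simultaneous} violation at $u$ and $v$ — which occurs exactly when $f(u) = f(v) = 1$ and both $N_{G'}(u)$ and $N_{G'}(v)$ are tight — appears to demand two independent switches, whereas adding one cycle to the graph is only allowed to cost $1$. Closing this gap is the crux: one must either relieve both tight neighbourhoods with a single well-chosen move (toggling a common $(+1)$-neighbour of $u$ and $v$ when one exists, or re-routing the $\pm1$ pattern along a short path of $G'$ in the spirit of the swapping arguments in the proof of Lemma~\ref{lemtree}), or else show that a bad function of $G'$ attaining $\beta(G')$ can always be chosen with enough slack — for instance with $f(u)$ or $f(v)$ equal to $-1$, or with one of the two neighbourhoods non-tight — so that the reinsertion of $e$ is absorbed with a net loss of at most $1$. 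Once this local surgery is shown to cost at most $1$ per added cycle, the induction closes and yields $\beta(G) \ge |V(G)| - |E(G)| - 1$.
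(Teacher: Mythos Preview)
Your proposal is a plan rather than a proof, and you explicitly flag the gap: after deleting a cycle edge $e=uv$ and taking an optimal bad function $f$ on $G'=G-e$, you must re-insert $e$ while losing at most $1$ from $f(V)$, and you do not show how. Your own bookkeeping already exposes why the edge-deletion route is awkward: toggling a single vertex from $+1$ to $-1$ costs $2$, so even the one-violation case overshoots the budget, and the simultaneous-violation case is worse. The suggested remedies --- toggling a common neighbour, a Lemma~\ref{lemtree}-style swap, or arranging that some extremal $f$ has slack at $u$ or $v$ --- are not carried out, and there is no reason to expect them to work in this form: adding a single edge can genuinely drop $\beta$ by more than $1$ (compare $\beta(P_n)$ with $\beta(C_n)$ from the introduction), so the lemma you are implicitly reaching for, namely $\beta(G)\ge\beta(G-e)-1$ for a non-bridge $e$, is false.

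The paper sidesteps this with a different reduction. Rather than deleting an edge, it treats the case $G$ a cycle separately, and otherwise picks a vertex $v$ on a cycle with $d(v)\ge 3$, selects two of its neighbours $w,x$, detaches the edges $vw$ and $vx$, and reattaches them to a fresh vertex $v'$. The resulting $G'$ is connected, has the same edge count and one more vertex, so the induction parameter $|E|-|V|$ drops by one. Given a bad $f$ on $G'$, define $g$ on $G$ by $g(v)=1$ if $f(v)=f(v')=1$ and $g(v)=-1$ otherwise, with $g=f$ elsewhere. Since $N_G(v)=N_{G'}(v)\sqcup N_{G'}(v')$ and $|N_{G'}(v')|=2$ forces $f(N_{G'}(v'))\le 0$, the constraint at $v$ survives; every other constraint can only loosen because $g(v)\le\min\{f(v),f(v')\}$. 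A direct check gives $g(V(G))=f(V(G'))-f(v)-f(v')+g(v)\ge f(V(G'))-1$. This vertex-splitting-and-merging trick is the missing idea: it turns your ``one flip costs $2$'' obstacle into ``merging two $\pm1$ values into one costs at most $1$'', which is exactly the budget the induction allows.
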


\begin{proof}
{ The proof is by induction on $ |E(G)| - |V(G)|$. If $G$ is a cycle, then clearly, the inequality holds. So assume that $G$ is not a cycle. Since $ G $ is connected, $ |E(G)| - |V(G)| \ge -1$. If $ |E(G)| - |V(G)| = -1$, then $ G $ is a tree and so by Theorem $ 7 $ in \cite{sub}, we are done. Suppose that the assertion holds for every graph $H$, where $ |E(H)| - |V(H)| \leq k$. Let $ G $ be a connected graph where $ |E(G)| - |V(G)| = k + 1$. Since $ |E(G)|- |V(G)| \ge 0$, $ G $ contains a cycle $ C,$ and there exits a vertex $v$ such that $ v \in V(C) $ and $ d(v) \ge 3$. Assume that $ u,w \in N(v) \cap V(C)$. Let $ x \in N(v)\backslash \{ u,w \}$. Remove the edge $ vw $ and $xv$ and add a new vertex $ v'$. Join $ v' $ to $ x$ and $w$. Call this new graph by $ G'$. Clearly, $G'$ is connected. Since $ |E(G')| = |E(G)| $ and $ |V(G')| = |V(G)| + 1$, we obtain that $|E(G')| -  |V(G')| = k$. By induction hypothesis, $\beta(G') \ge -k-1$. Let $f$ be a bad function such that $f(V(G'))=\beta(G')$. We provide a bad function for $G$. Define the function $ g $ as follows: For every vertex $z\in V(G)\setminus \{v\}$, $g(z)=f(z)$. If $f(v)=f(v')=1$, then define $g(v)=1$. Otherwise, $g(v)=-1$. Clearly, $ g(V(G)) \ge f(V(G')) - 1 \ge -k-2$, as desired.}
\end{proof}
Now, we show that the bad decision number of every cubic graph is non-negative.
\begin{thm}
{
If G is a cubic graph, then $ \beta(G) \ge 0$.
}
\end{thm}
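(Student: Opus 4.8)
The plan is to reinterpret bad functions on a cubic graph as total dominating sets, and then reduce the theorem to a standard bound on the total domination number. Suppose $G$ is cubic. For each $v\in V(G)$ the value $f(N(v))$ is a sum of three numbers from $\{-1,1\}$, so $f(N(v))\in\{-3,-1,1,3\}$, and the constraint $f(N(v))\le 1$ is therefore equivalent to $f(N(v))\ne 3$, i.e.\ to the statement that $v$ has at least one neighbour with value $-1$. Hence, writing $S=\{v\in V(G):f(v)=-1\}$, a function $f:V(G)\to\{-1,1\}$ is a bad function for $G$ if and only if $S$ is a total dominating set of $G$, and in that case $f(V(G))=|V(G)|-2|S|$. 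Since a cubic graph has no isolated vertices, total dominating sets exist, and taking the maximum over all bad functions gives $\beta(G)=|V(G)|-2\gamma_t(G)$; thus the theorem is exactly the assertion that $\gamma_t(G)\le\tfrac{1}{2}|V(G)|$ for every cubic graph $G$. As $\beta$, $\gamma_t$ and the order are all additive over connected components, and every component of a cubic graph is a connected cubic graph, it suffices to prove $\gamma_t(G)\le n/2$ for connected cubic $G$ of order $n$.

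For this I would appeal to the fact — valid for every graph with minimum degree at least $3$ — that $\gamma_t\le n/2$. One clean route: the total dominating sets of $G$ are precisely the transversals of the $3$-uniform hypergraph $\mathcal{H}$ whose edges are the open neighbourhoods $N(v)$, $v\in V(G)$; this hypergraph has at most $n$ edges, so the Chv\'atal--McDiarmid bound $\tau(\mathcal{H})\le\frac{n+|E(\mathcal{H})|}{4}$ for $3$-uniform hypergraphs yields $\gamma_t(G)=\tau(\mathcal{H})\le\frac{n+n}{4}=\frac{n}{2}$. (Equivalently one may quote the theorem of Archdeacon et al.\ that connected graphs with $\delta\ge 3$ satisfy $\gamma_t\le n/2$.) If a self-contained argument is preferred, the approach is extremal: take a minimum total dominating set $S$; minimality forces every $u\in S$ to have a private neighbour $w_u$ with $N(w_u)\cap S=\{u\}$, and distinct vertices of $S$ have distinct private neighbours, so a counting/discharging argument over the edges joining $S$ to $V(G)\setminus S$, using $d(v)=3$ at every vertex, gives $|V(G)\setminus S|\ge|S|$. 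The one delicate case — vertices of $S$ whose private neighbours again lie in $S$ — is controlled by choosing $S$, among all minimum total dominating sets, to maximise the number of edges of $G$ inside $S$.

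I expect the total-domination inequality itself to be the only real obstacle: naive greedy selection or a first-moment computation on a random vertex set (followed by repairing undominated vertices) only produce a total dominating set of size roughly $0.73\,n$ for cubic $G$, so genuine structural input — the private-neighbour/discharging analysis, or the hypergraph transversal machinery — is needed to reach $n/2$. Finally, for sharpness, $K_4$ is cubic with $\gamma_t(K_4)=2=\tfrac{1}{2}|V(K_4)|$, hence $\beta(K_4)=0$, and disjoint unions of copies of $K_4$ (or other extremal cubic graphs) show that $\beta(G)\ge 0$ cannot be improved, even for cubic graphs of arbitrarily large order.
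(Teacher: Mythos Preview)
Your proposal is correct and follows essentially the same route as the paper: both reduce the inequality $\beta(G)\ge 0$ to the total-domination bound $\gamma_t(G)\le n/2$ for cubic graphs by identifying the $-1$ set of a bad function with a total dominating set, and both invoke an external result for that bound (the paper cites Favaron--Henning--Mynhardt--Puech, you cite Chv\'atal--McDiarmid or Archdeacon et al.). Your write-up is more explicit about the equivalence $\beta(G)=n-2\gamma_t(G)$ and about sharpness, but the underlying argument is the same.
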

\begin{proof}
{
Let $G$ be a cubic graph of order $n$. By Theorem $ 30 $ in \cite{hen}, we have a total dominating set $ S $ of size at most $\frac{n}{2}$. Assign value $-1$ and $1$ to all vertices of $ S $ and $V\setminus S$, respectively. Thus, $ \beta(G) \ge 0$.
}
\end{proof}
\begin{thm}\label{mmm}
{
Let $ G $ be a cubic graph of order $n$. The following three statements are equivalent:

$i)$ The vertices of $ G $ can be partitioned into the graph shown in Figure~\ref{fig:decompose}.

$ii)$ There exists a bad function $f$, such that $f(N(v))=1$, for every $v\in V(G)$.

$iii)$ $\beta(G) = \frac{n}{3}$.

\begin{figure}[h!]
\centering
  \includegraphics[height=1cm]{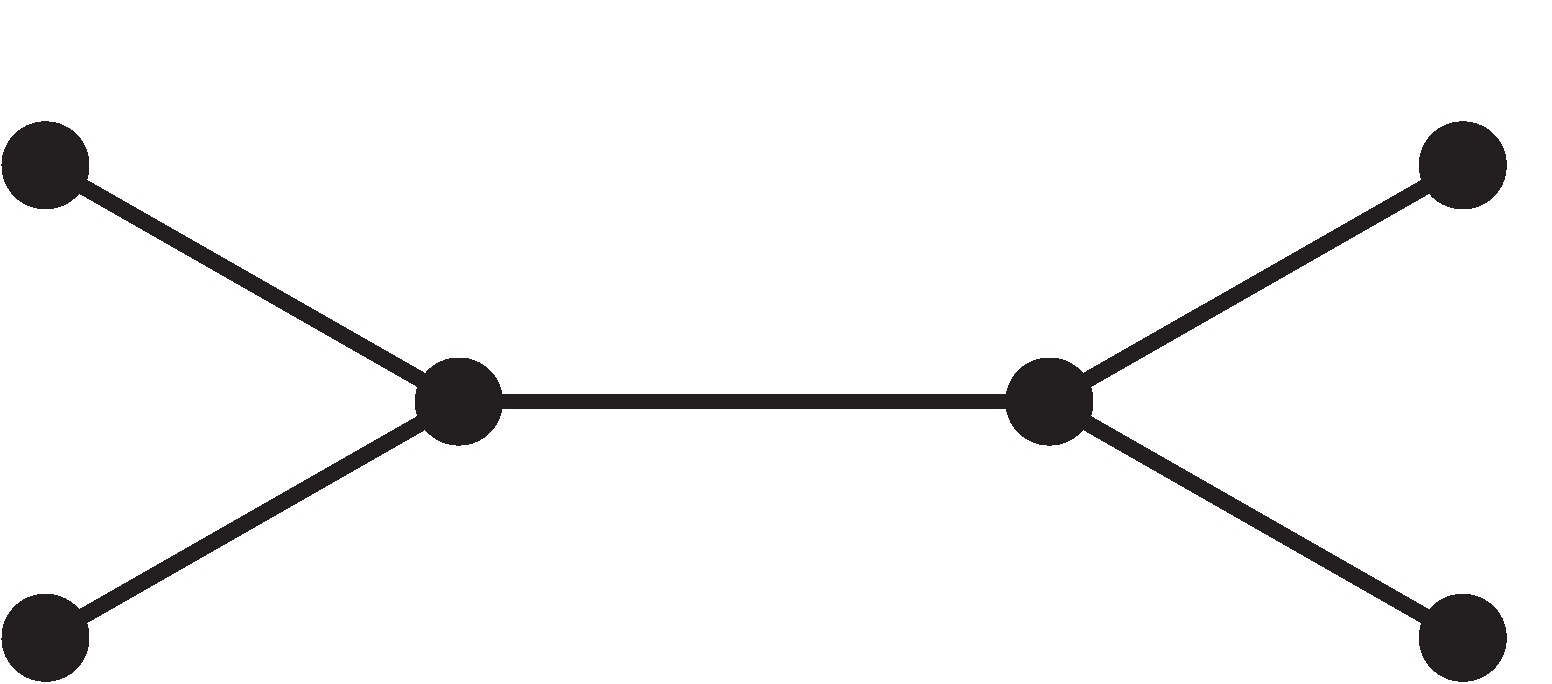}
  \caption{The graph $H$}
  \label{fig:decompose}
\end{figure}
}
\end{thm}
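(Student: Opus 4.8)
The plan is to prove the two equivalences $(ii)\Leftrightarrow(iii)$ and $(ii)\Leftrightarrow(i)$ separately. The common engine is an elementary observation valid on any cubic graph: if $f\colon V(G)\to\{-1,1\}$ and $M=\{v\in V(G):f(v)=-1\}$, then for every vertex $v$ we have $f(N(v))=3-2\,|N(v)\cap M|$, so the bad inequality $f(N(v))\le 1$ says exactly that $v$ has at least one neighbour in $M$, while $f(N(v))=1$ says exactly that $v$ has precisely one neighbour in $M$. Also $f(V(G))=n-2|M|$. I will use these translations repeatedly.

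For $(ii)\Rightarrow(iii)$ I would double count the pairs $(v,m)$ with $vm\in E(G)$ and $m\in M$: summing over $v$ gives $\sum_{v}|N(v)\cap M| = n$ by $(ii)$, and summing over $m$ gives $\sum_{m\in M}d(m)=3|M|$; hence $|M|=n/3$ and $f(V(G))=n/3$. Combined with the bound $\beta(G)\le n/3$ for cubic graphs of order $n$ recalled in the Introduction, this forces $\beta(G)=n/3$. For $(iii)\Rightarrow(ii)$, pick a bad function $f$ attaining $\beta(G)=n/3$, so that $|M|=n/3$; the bad condition gives $|N(v)\cap M|\ge 1$ for each of the $n$ vertices $v$, while $\sum_{v}|N(v)\cap M|=3|M|=n$, so every one of these $n$ integers equals $1$, i.e. $f(N(v))=1$ for every $v$.

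It remains to connect $(ii)$ with the structural statement $(i)$. Assume $(ii)$. Since each $m\in M$ has a unique neighbour in $M$, the subgraph $G[M]$ is a perfect matching, and the two remaining neighbours of $m$ lie in $V(G)\setminus M$; since each vertex of $V(G)\setminus M$ in turn has a unique neighbour in $M$, the subgraph $G[V(G)\setminus M]$ is $2$-regular. Thus $V(G)$ splits into blocks, one per matched pair in $M$, each block carrying precisely the local pattern drawn in Figure~\ref{fig:decompose}, with the edges leaving a block being exactly those permitted by the figure; this is the partition in $(i)$. Conversely, from such a partition one recovers $(ii)$ by giving value $-1$ to the two inner vertices of every copy of $H$ and $1$ to the others: the prescribed attachment pattern then makes $f(N(v))=1$ hold at every vertex (an inner vertex sees its matched partner and two $+1$'s, an outer vertex sees one $-1$ and two $+1$'s), so $f$ is the desired bad function, which by $(ii)\Rightarrow(iii)$ also gives $\beta(G)=n/3$.

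The two double counts and the local degree checks are routine; I expect the real work to be in the structural step $(ii)\Rightarrow(i)$ — carefully verifying that the incidences among the blocks are forced to match the gluing rule encoded in Figure~\ref{fig:decompose}, and dually that the substitution $f\equiv-1$ on the inner vertices of each copy of $H$ indeed yields $f(N(v))=1$ at every outer vertex as well — rather than in the arithmetic, which is short.
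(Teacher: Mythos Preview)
Your proposal is correct and follows essentially the same route as the paper. The paper's $(ii)\Leftrightarrow(iii)$ uses the identity $\sum_{v}f(N(v))=3f(V(G))$ together with Wang's upper bound $\beta(G)\le n/3$, which is exactly your double count rewritten in the $|M|$ language; and for $(ii)\Leftrightarrow(i)$ the paper likewise observes that the $-1$ vertices form a perfect matching and takes the blocks $N[u_1]\cup N[u_2]$ over matched pairs $u_1u_2$, then assigns $-1$ to the two degree-$3$ vertices of each copy of $H$ for the converse---precisely your ``inner/outer'' assignment.
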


\begin{proof}
{
$(i) \Rightarrow (ii)$ Let $H$ be the graph shown in Figure~\ref{fig:decompose}. Suppose that $ V(G) $ is partitioned into $m$ graphs $H_1,\dots , H_m,$ for some positive integer $m$, where $H_i \simeq H$, for $i = 1,\ldots ,m$. Now, define a bad function $f$ for $G$ as follows. Assign the value $-1$ to each $u\in V(H_i),$ where $d_{H_i}(u)=3$, for $i = 1,\ldots, m.$ Also assign $1$ to other vertices of $G$. Obviously, $f(N(v)) = 1$, for each $v\in V(G)$.

$(ii) \Rightarrow (i)$ Now, assume that there exists a bad function $ f $ such that $ f(N(v))=1 $, for each $ v\in V(G)$. Each vertex with value $ -1 $, is adjacent to exactly one vertex with value $ -1 $. Consider the induced subgraph on $N[u_1]\cup N[u_2]$, for each pair of adjacent vertices $u_1$ and $u_2$ with $f(u_1)=f(u_2)=-1,$ and call them by $H_1,\dots , H_m,$ for some positive integer $m$. Clearly, $H_i \simeq H$, for each $i=1,\ldots, m.$ It is not hard to see $V(H_i)\cap V(H_j) =\emptyset $, for each $i,j$, $1\leq i < j\leq m,$ and $V(G)=\bigcup_{i=1}^{m}V(H_i)$.

$(ii) \Rightarrow (iii)$ It is not hard to see that $\sum\limits_{v\in V(G)} f(N(V))=3f(V(G))=n $. So by Theorem $4$ in $[6]$, we obtain that $\beta(G) = \frac{n}{3}$.

$(iii) \Rightarrow (ii)$ Consider a bad function $f$ for $G$ such that $f(V(G))= \beta(G) = \frac{n}{3}.$ Clearly, $ 3f(V(G))=\sum\limits_{v\in V(G)} f(N(V))= n. $ Thus $f(N(V))=1,$ for each $v\in V(G)$.
}
\end{proof}

\section{Nice Decision Number} 
Here, we provide a lower bound for the nice decision number of trees and show that $\overline{\beta}(G)\geq -\frac{3n}{7}$, for every cubic graph $G$ of order $n$. We start this section with the following result.

In Theorem 10 in \cite{henning}, Henning showed that the nice decision number of every tree is non-negative. We prove this by a simpler approach.
\begin{thm}
{For every tree $ T $, $\overline{ \beta}(T)\geq 0 $.
}
\end{thm}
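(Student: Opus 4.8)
The plan is to argue by induction on $|V(T)|$, the base cases $|V(T)|\le 2$ being immediate (a single vertex gets value $1$; for $K_2$ take the values $1$ and $-1$, so $f(V)=0$). For the inductive step I would fix a root of $T$, take a leaf $u$ at maximum depth, let $v$ be its parent and $p$ the parent of $v$; by the choice of $u$, every neighbour of $v$ except $p$ is a leaf, say $w_1=u,\dots,w_k$ with $k\ge 1$. Delete the pendant star at $v$, i.e.\ put $T'=T-\{v,w_1,\dots,w_k\}$, which is again a tree (it still contains $p$), and by the induction hypothesis choose a nice function $f'$ on $T'$ with $f'(V(T'))\ge 0$.

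The natural extension keeps $f=f'$ on $V(T')$ and then picks the values on $v,w_1,\dots,w_k$. Setting $f(v)=-1$ automatically preserves the closed-neighbourhood condition at $p$ (that sum only decreases) and frees the conditions at the leaves $w_i$ (since $f(w_i)-1\le 1$ always). The one remaining constraint is the one at $v$, namely $-1+f'(p)+\sum_{i=1}^{k}f(w_i)\le 1$, that is $\sum_i f(w_i)\le 2-f'(p)$. Choosing the $f(w_i)$ so that $\sum_i f(w_i)$ is as large as the parity of $k$ and this bound allow, one checks that the block $\{v,w_1,\dots,w_k\}$ contributes at least $0$ to $f(V(T))$ in every case except when $f'(p)=1$ and $k$ is even, where its contribution is $-1$. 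Since $f'(V(T'))\ge 0$ already, this finishes all cases but that one.

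In the exceptional case I would first try to repair $f'$: flipping $f'(p)$ from $1$ to $-1$ keeps $f'$ nice (every closed-neighbourhood sum that contains $p$ only drops) and lowers the total by exactly $2$, so if $\overline{\beta}(T')\ge 2$ we may assume $f'(p)=-1$ and conclude as above. The genuinely delicate situation is thus $f'(p)=1$, $k$ even, and $\overline{\beta}(T')\in\{0,1\}$. Here one would either (i) show that among the nice functions attaining $\overline{\beta}(T')$ one can be chosen with value $-1$ at $p$ — a local exchange argument around $p$ inside $T'$, moving the $-1$ onto $p$ from a suitably chosen neighbour — or (ii) re-run the reduction at a different deepest support vertex (one with an odd number of pendant leaves, or one whose parent is not forced to the value $1$), which exists unless $T$ has a very restricted shape that can be treated directly. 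I expect this point — excluding, or handling by hand, the configuration in which every available reduction lands in the exceptional case while $\overline{\beta}(T')$ is too small to absorb a flip — to be the main obstacle. It may in fact be cleaner to bypass it by strengthening the induction hypothesis to a statement that also records the best value of $f(V(T))$ under the requirement that $f$ take the value $-1$ at a prescribed vertex, together with the slack of that vertex's closed-neighbourhood condition; that is precisely the data the gluing step consumes, so tracking it should make the step go through uniformly.
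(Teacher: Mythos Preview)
Your reduction---deleting the whole pendant star $\{v,w_1,\dots,w_k\}$---creates exactly the difficulty you describe, and the gap is real: you do not actually close the case $f'(p)=1$, $k$ even. The local--exchange idea (i) is not guaranteed to work; for instance, if $T'=P_3$ with $p$ an endpoint, every optimal nice function has $f'(p)=1$, so no exchange at $p$ is available. Option (ii) and the ``strengthened hypothesis'' are only sketched, so as written the argument is incomplete precisely at the point you flag.

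The paper sidesteps this entirely by a different and smaller reduction. Instead of removing the whole star, it splits into two cases. If $v$ has at least two pendant children (your $k\ge 2$), remove only two of them, say $w_1,w_2$, set $T'=T\setminus\{w_1,w_2\}$, take an optimal nice $f$ on $T'$, and extend by
\[
g(v)=-1,\qquad g(w_1)=1,\qquad g(w_2)=f(v),\qquad g=f\text{ elsewhere.}
\]
Then $g(V(T))=f(V(T'))$ exactly (the old value $f(v)$ is ``parked'' on $w_2$), and a direct check shows $g$ is nice at $v$, at $w_1,w_2$, and at every neighbour of $v$, with no hypothesis on $f(p)$ needed. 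If $k=1$ (no two pendants share a neighbour), remove $\{u,v\}$, take an optimal nice $f$ on $T'=T\setminus\{u,v\}$, and set $g(u)=1$, $g(v)=-1$; again $g(V(T))=f(V(T'))$ and niceness is immediate. In both cases the deleted block contributes exactly $0$, so the induction closes without any parity obstruction or any assumption on the value at $p$.

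The moral: removing only two vertices and relocating the old value of $v$ onto one of the new leaves is the device that makes the extension uniform; once you see it, there is nothing left to repair.
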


\begin{proof}
{We apply induction on $ n =|V(T)| $. For $ n=1,2, $ the assertion is trivial. Suppose that the assertion holds for every tree of order at most $n-1$. Consider a tree $ T $ of order $ n \geq 3 $. Two cases maybe considered:\\ \\
{\bf Case 1.} Suppose that there are two pendant vertices $v_1$ and $v_2$, with a common neighbor $u$. Let $ T^{\prime}=T\setminus \{v_{1} , v_{2}\} $. By induction hypothesis, $ \overline{\beta}(T^{\prime})\geq 0 $. Consider a nice function $ f $ such that $ f(V(T^{\prime}))=\overline\beta(T^{\prime}) $. Now, we introduce a nice function $ g $ for $ T $ as follows:

If $  v\in V(T)\setminus \{u,v_{1},v_{2}\} $, then define $g(v)=f(v)$. Assign values $-1, 1$ and $f(u)$ to $u, v_1$ and $v_2$, respectively. It is not hard to see that $ g $ is a nice function and $ g(V(T))\geq 0 $. So we are done.\\ 
\\
{\bf Case 2.} Suppose that there is no pair of pendant vertices with a common neighbor. Let $P$ be one of the longest paths in $T$, and $u$ be a pendant vertex in $V(P)$. Define $ T'=T\setminus  N[u]$. It is straightforward to See that $T'$ is a tree. Consider a nice function $ f $ for $ T' $ such that $ f(V(T'))=\overline{\beta}(T') $. Now, define a function $ g $ for $ T $ as follows:

$$
g(v) = 
\left\{
\begin{array}{lr}
f(v)& v \in V(T)\setminus  N[u] \\
1 & v=u\\
-1 & v\in N(u)\\
\end{array}
\right.
$$

It is clear that $ g $ is a nice function and $ g(V(T))\geq 0 $, so the proof is complete.
}
\end{proof}
\begin{thm}
{Let $G$ be a cubic graph of order $n$. If $\overline{\beta}(G)=0$, then $4\,|\,n$.
}
\end{thm}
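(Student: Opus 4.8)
The plan is to use a counting argument on a nice function $f$ achieving $\overline{\beta}(G)=0$. Since $G$ is cubic, for every $v\in V(G)$ we have $f(N[v])\le 1$, and because $|N[v]|=4$ is even, $f(N[v])$ is in fact always an even integer, so $f(N[v])\le 0$ for all $v$. Summing this inequality over all $v\in V(G)$ gives $\sum_{v}f(N[v])=4\,f(V(G))=0$, forcing $f(N[v])=0$ for \emph{every} vertex $v$. Thus each closed neighborhood contains exactly two vertices with value $1$ and two with value $-1$. Let $A=f^{-1}(1)$ and $B=f^{-1}(-1)$; from $f(V(G))=0$ we get $|A|=|B|=n/2$, so $n$ is even — say $n=2m$ — and it remains to show $m$ is even.

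The key structural step is to analyze the bipartite-like structure forced by the condition $f(N[v])=0$ for all $v$. First I would examine a vertex $v\in A$: among $\{v\}\cup N(v)$ exactly two have value $1$, so $v$ has exactly one neighbor in $A$ and two in $B$. Symmetrically, each $v\in B$ has exactly one neighbor in $B$ and two in $A$. Hence the subgraph $G[A]$ is $1$-regular (a perfect matching on $A$) and likewise $G[B]$ is a perfect matching on $B$; consequently the edges between $A$ and $B$ form a $2$-regular bipartite graph, i.e.\ a disjoint union of even cycles alternating between $A$ and $B$. Counting edges inside $A$: a perfect matching on $|A|=m$ vertices requires $m$ to be even, which is exactly the claim $4\mid n$.

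I would write this up by first establishing the parity observation $f(N[v])\le 0$, then the summation identity forcing $f(N[v])=0$ everywhere, then the local degree count giving the two perfect matchings, and finally concluding $m=|A|$ is even. The main obstacle — really the only subtle point — is getting the local count right: one must check that $v$ itself is counted in $N[v]$, so the split "two $+1$'s and two $-1$'s" includes $v$, and then correctly deduce that $v\in A$ has precisely one $A$-neighbor (not two). Everything after that is immediate, since a graph with a $1$-regular subgraph on a vertex set of odd size cannot exist. A small alternative ending, avoiding the matching language, is simply to count edges within $A$: since every vertex of $A$ has exactly one neighbor in $A$, the number of such edges is $|A|/2$, an integer, so $|A|=n/2$ is even and $4\mid n$.
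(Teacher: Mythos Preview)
Your proof is correct and reaches the same structural conclusion as the paper --- that the induced subgraph on one color class is a perfect matching, forcing $n/2$ to be even --- but you arrive there by a slightly different counting device. The paper first observes $d_P(v)\le 2$ for $v\in N$ and $d_N(u)\ge 2$ for $u\in P$, then counts the edges $e$ between $P$ and $N$ to get $2|P|\le e\le 2|N|$; since $|P|=|N|$ this squeezes the inequalities to equalities and yields $d_N(v)=1$ for every $v\in N$, hence a perfect matching on $N$. You instead use the global identity $\sum_v f(N[v]) = 4f(V(G)) = 0$ together with the parity bound $f(N[v])\le 0$ to force $f(N[v])=0$ for every $v$ in one stroke, and then read off the local degree counts directly. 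Your route is arguably a little cleaner --- the summation trick bypasses the separate edge-counting step --- but the two arguments are close cousins and both hinge on the same parity observation that $f(N[v])$ is even.
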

\begin{proof}
{We note that $n \equiv 0$ (mod 2). Suppose that $n = 2k$, for some positive integer $k$. Let $P$ and $N$ be the set of vertices with values $1$ and $-1$, respectively. Since $\overline{\beta}(G) = 0$, we conclude that $|P| = |N| = k$. Clearly, $d_P(v) \leq 2,$ for every $v \in N$, and $d_N(u) \geq 2$, for every $u \in P$. Let $e$ be the number of edges between $P$ and $N$. So we obtain that $2|P| \leq e \leq 2|N|$. Hence, $e = 2|P|$. Therefore, $d_P(v) = 2$, for every $ v \in N$. Thus, $d_N(v) = 1$ for every $ v \in N$. So we obtain that $|N| = 2t$, for some positive integer $t$. Thus, $n = 2k = 2|P| = 2|N| = 4t$. The proof is complete.}
\end{proof}

Now, we introduce a lower bound for cubic graphs.

\begin{thm}
For every cubic graph $ G $ of order $ n $, $ \overline{\beta}(G)\geq \frac{-3n}{7} $.
\end{thm}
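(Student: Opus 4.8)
First, since both $\overline\beta$ and $n$ are additive over connected components, it suffices to treat connected $G$. The key reformulation: in a cubic graph $|N[v]|=4$, so the inequality $f(N[v])\le 1$ is equivalent to $f(N[v])\le 0$, i.e. at least two vertices of $N[v]$ receive value $-1$. Writing $N=f^{-1}(-1)$ and $P=f^{-1}(1)$, the nice functions of $G$ correspond exactly to the \emph{double dominating} sets $N$ --- those with $|N[v]\cap N|\ge 2$ for every $v$ --- and $f(V(G))=|P|-|N|=n-2|N|$. Hence $\overline\beta(G)=n-2\gamma_{\times 2}(G)$, where $\gamma_{\times 2}(G)$ denotes the minimum size of a double dominating set, and the statement is equivalent to $\gamma_{\times 2}(G)\le\tfrac{5n}{7}$; complementarily, it is enough to exhibit $P\subseteq V(G)$ with $|P|\ge\tfrac{2n}{7}$ and $|N[v]\cap P|\le 2$ for every $v$. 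This last condition forces $G[P]$ to have maximum degree at most $1$ and every vertex outside $P$ to have at most two neighbours in $P$.

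The plan is to analyse an extremal such $P$: take $P$ with $|N[v]\cap P|\le 2$ for all $v$ of maximum size, and among those one minimising a secondary quantity (say, the number of vertices of $N:=V(G)\setminus P$ with no neighbour in $P$, or the number of edges of $G[P]$). Let $B=\{w:|N[w]\cap P|=2\}$ be the saturated vertices, split as $B=B_P\cup B_N$ according to whether $w\in P$; maximality forces $V(G)=P\cup N[B]$, a vertex of $B_P$ has one neighbour in $P$ and two in $N$, and a vertex of $B_N$ has two in $P$ and one in $N$. Double counting the $P$--$N$ edges gives $e(P,N)=3|P|-|B_P|$, and a short count of the vertices of $N$ by their number of neighbours in $P$ --- using in particular that a vertex of $N$ with no neighbour in $P$ must have a neighbour in $B_N$ --- already yields $n\le 4|P|-|B_P|\le 4|P|$. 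That is merely $\gamma_{\times 2}(G)\le\tfrac{3n}{4}$, i.e. the previously known bound $\overline\beta(G)\ge-\tfrac{n}{2}$.

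The real work is improving $4|P|$ to $\tfrac72|P|$. Here maximality alone does not suffice; one must rule out the borderline configurations --- large clusters of ``deep'' vertices of $N$ (all of whose neighbours lie in $N$, grouped around tightly packed vertices of $B_N$) --- by local exchanges: given such a cluster, deleting a bounded number of vertices of $P$ and inserting strictly more of the deep vertices should produce a valid $P$ of larger size, contradicting maximality (or, in the equality case, the secondary choice). Done carefully this upgrades the crude estimate $b\le|B_N|$ on the number of deep vertices to roughly $b\le\tfrac12|B_N|$, with a parallel gain when $|B_P|$ is small, converting $n\le 4|P|-|B_P|$ into $n\le\tfrac72|P|$, i.e. $\gamma_{\times 2}(G)\le\tfrac{5n}{7}$. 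I expect this discharging/local-exchange step --- in particular choosing the secondary extremal quantity so that every bad local picture is provably improvable while the global count still closes with the constant $\tfrac72$ --- to be the main obstacle; the reformulation and the crude counting are routine.
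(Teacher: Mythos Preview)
Your reformulation in terms of double domination and the dual viewpoint of maximising $|P|$ subject to $|N[v]\cap P|\le 2$ is correct, and your overall strategy --- take $P$ extremal, add a secondary extremal condition, then use local exchanges to rule out bad configurations and close a counting argument --- is exactly the paper's approach in complementary language. The paper works with $N$ rather than $P$: it fixes a nice function $f$ achieving $\overline\beta(G)$ and, among those, one minimising the number of edges inside $N$; it then stratifies $N$ and $P$ by the value of $f(N[\cdot])$ and proves three concrete structural claims via explicit swaps, ultimately obtaining $2|N|\le 5|P|$.

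The gap in your proposal is that you stop precisely where the content begins. You derive the crude inequality $n\le 4|P|-|B_P|$ (equivalently $\overline\beta(G)\ge -n/2$), and then \emph{assert} that careful local exchanges will upgrade this to $n\le\tfrac72|P|$, without specifying which secondary quantity to minimise, which local pictures to exclude, or how the resulting inequalities combine. Your own text concedes this: ``I expect this discharging/local-exchange step \ldots\ to be the main obstacle.'' That step is the entire proof. In the paper it is not a vague discharging scheme but three sharp claims: (i) every vertex of $N_{-2}\cup N_{-4}$ has a neighbour in $N_0$ (else a single swap contradicts the edge-minimality of $N$); and, after introducing $M_1,M_2\subseteq N_{-4}$ and $U_0,U_1,U_2\subseteq N_0$, (ii) no two vertices of $U_0$ share a neighbour in $P$, and (iii) no $u_1\in U_1$, $u_2\in U$ share a neighbour in $P_0$ --- each proved by an explicit three-vertex value flip that increases $f(V(G))$. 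These feed into an edge count between $P$ and $N$ together with the bound $|M_1|\le |P_{-2}|/2+|P_0|$, yielding $2|N|\le 5|P|$. Until you actually choose a secondary invariant and carry out analogues of (i)--(iii), what you have is a plan, not a proof.
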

\begin{proof}
{
Let $ f $ be a nice function of $ G $ for which $ f(V(G))=\overline{\beta_{D}}(G) $, and $ f $ has the minimum number of edges with both end vertices having value $ -1 $. Consider the following sets:
$$N=\{v\in V(G)\,|\,f(v)=-1\}$$
$$P=\{v\in V(G)\,|\, f(v)=1\}$$
$$N_{i}=\{v\in N\,|\,f(N[v])=i\} \:\text{ for }\, i=0, -2, -4 $$
$$P_{i}=\{v\in P\,|\,f(N[v])=i\} \:\text{ for }\, i=0, -2.$$
Now, we prove the following statement:\\
$(i)$ Each vertex in $ N_{-2}\cup N_{-4} $ has at least one neighbor in $ N_{0} $.

It is not hard to see that each vertex in $ N_{-2} \cup N_{-4} $, say $ v $, is adjacent to some vertex in $ P_{0}\cup N_{0} $, otherwise we can change the value of $ v $ from $ -1 $ to $ 1 $ to increase $ f(V(G)) $, a contradiction.
Clearly, $(i)$ is obvious for every vertex in $ N_{-4} $.\\
Assume that $ v \in N_{-2} $ has no neighbor in $ N_{0} $. Since $ f(N[v])=-2 $, there is exactly one vertex in $ P_{0} $, say $ u $, adjacent to $ v $. Call the other neighbors of $ v $, $ u_{1} $ and $ u_{2} $. Clearly, $u_1,u_2 \in  N_{-2}\cup N_{-4} $. Now, define the function $g$ as follows:
$$
g(x)= \left\{
\begin{array}{ccc}
f(x)&\text{if }x\neq u,v\\
-f(x)&\text{if }x=u,v\\
\end{array}
\right.
.$$
\begin{figure}[h!]
\centering
  \includegraphics[height=3cm]{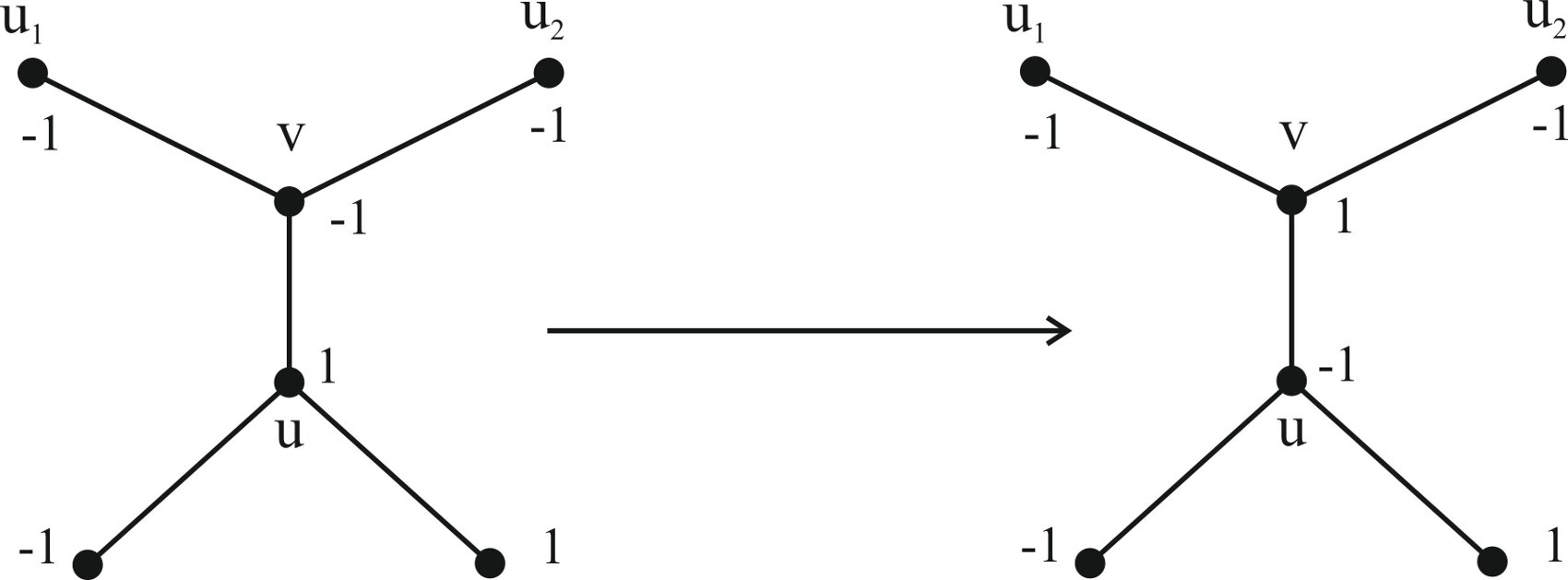}
  \caption{The function $g$ on $G$}
  \label{fig:3n71}
\end{figure}

For each $ x\notin \{u_{1},u_{2},v,u \}$, $ g(N[x]) \leq f(N[x]) $. Also $ g(N[x])=f(N[x]) $ for $ x=u,v $, and $ g(N[x])=f(N[x])+2 $ for $ x=u_{1},u_{2} $. Since $ u_{1},u_{2} \notin N_{0} $, $ f(N[x])\leq -2 $ for $ x=u_{1},u_{2} $. Thus $g(N[x])\leq 0$, for each $ x\in V(G) $. For the function $g$, the number of edges with endpoints $-1$ is reduced one unit, a contradiction. So $(i)$ is proved.

Now, consider the following sets:
$$M_1=\{z\in N_{-4}\,|\,|N(z)\cap N_0|=1\}\,\,\,\,\,\,\, \,\,\,M_2=\{z\in N_{-4}\,|\,|N(z)\cap N_0|\geq 2\}$$
By $(i)$, we obtain that $M_1 \cup M_2 = N_{-4}$. If $t$ denotes the number of edges between $N_0$ and $N_{-2}\cup N_{-4}$, then $t \leq |N_0|$ and $t \geq 2|M_2|+|M_1|+|N_{-2}|$. Thus, $2|M_2|+|M_1|+|N_{-2}|\leq |N_0|$ (1).
Let $U=\{z\in N_0\,|\,\,|N(z)\cap M_1|=1\}$, and $U_i=\{z\in U\,|\,\,|N(z)\cap P_0|=i\}$, for $i=0,1,2$.

We prove the following two statements:\\
$(ii)$ There is no pair of vertices in $U_0$ with a common neighbor in $P$.

By contradiction, suppose that there exist $u_1, u_2\in U_0$ and $w\in N(u_1)\cap N(u_2)\cap P$. By definition of $U_0$, $w\in P_{-2}$ and there exist $v_1, v_2\in M_1$ and $w_1\in P_{-2}$ such that $w_1,v_1\in N(u_1)$ and $v_2\in N(u_2)$. Since $v_1, v_2\in M_1$,   we obtain that $v_1\neq v_2$. Now, define the function $g$ as follows:

$$
g(x)= \left\{
\begin{array}{ccc}
f(x)&\text{if }x\neq u_1,w,v_2\\
-f(x)&\text{if }x=u_1,w,v_2.\\
\end{array}
\right.
$$
\begin{figure}[h!]
\centering
  \includegraphics[height=2cm]{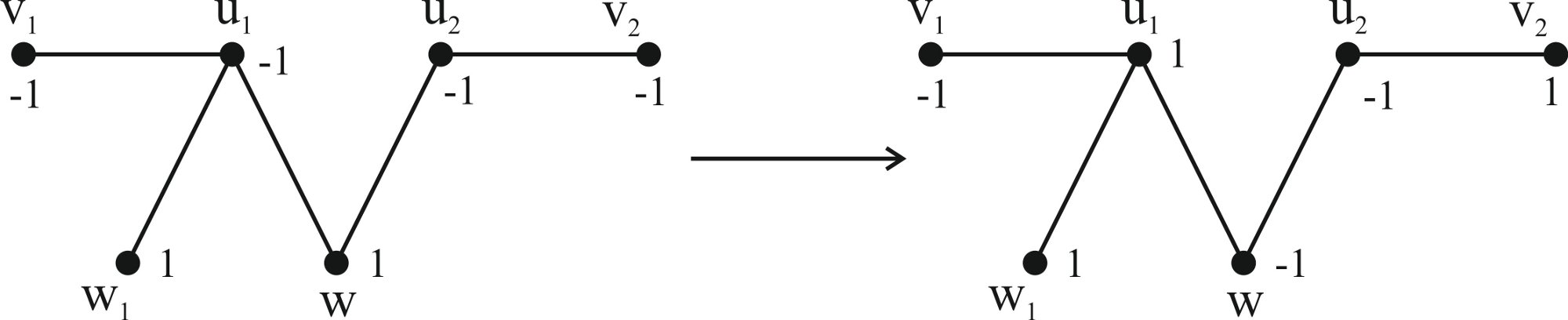}
  \caption{The function $g$ on $G$}
  \label{fig:3n72}
\end{figure}

It is straightforward to see that $g(N[x]) \leq f(N[x])$ for each $x\notin N(u_1)\cup N[v_2] $. Also $g(N[v_2]) = -2$ and $g(N[w])=f(N[w])$. Since $w_1\notin N(v_2)$ and $w_1\in P_{-2}$, we have $g(N[w_1])=0$. Clearly, $g(N[v_1])\leq f(N[v_1])+4\leq 0$. It is not hard to see that $g(N[x])=f(N[x])+2\leq 0$, for each $x\in N(v_2)\setminus \{v_1,u_2\}$. Thus $g$ is a nice function and $g(V(G)) = f(V(G)) + 2$, a contradiction.\\ \\
$(iii)$ There is no pair of vertices $u_1\in U_1$ and $u_2\in U$ with a common neighbor in $P_0$.

By contradiction, assume that there exist $u_1\in U_1$, $u_2\in U$, and $w\in N(u_1)\cap N(u_2)\cap P_0$. By definition, there exist $v_1, v_2\in M_1,\, v_1\neq v_2,$ which are adjacent to $u_1$ and $u_2$, respectively. Let $g$ be the function defined in $(ii)$. Note that $g(N[w])=f(N[w])$. Similar the argument given in $(ii)$, $g(N[x])\leq 1$, for every $x\in V(G)\setminus \{w\}$. Thus $g$ is a nice function and $g(V(G)) = f(V(G)) + 2$, a contradiction.

Now, we prove the theorem.
Clearly, by $(ii)$, $|U_0|\leq |P_{-2}|/2$. If $V_1$ denotes the set of vertices in $P_0$ which are adjacent to a vertex in $U_1$, then by $(iii)$, $|V_1|\geq |U_1|.$ Also if $V_2$ denotes the set of vertices in $P_0$ which are adjacent to a vertex in $U_2$, then $|V_2| \geq |U_2|$. By $(iii)$, $V_1\cap V_2=\emptyset$, so we obtain that $|U_1|+|U_2|\leq |P_0|$. It is clear that $|M_1|=|U|$. Hence, $|M_1|\leq |P_{-2}|/2+|P_0|\text{ }(2)$.

If $k$ denotes the number of edges between $P$ and $N$, then $|N_{-2}|+2|N_0|=k=3|P_{-2}|+2|P_0|\text{ }(3)$.\\
By adding $(1), (2)$ and $(3)$, we obtain that $2|N| \leq 3|P| +|P_{-2}|/2 +|N_0|$. By $(3)$, $|N_0| \leq \frac{3}{2}|P_{-2}|+|P_0|$. Thus $2|N|\leq 5|P|-|P_0|\leq 5|P|$. So we obtain that $\overline{\beta}(G)=2|P|-n\geq \frac{-3n}{7}$.
}
\end{proof}

{It seems that for every cubic graph $ G $ of order $ n $, $ {\overline{\beta}}(G) \geq\frac{-n}{5} $.
We introduce an infinite family of bipartite graphs for which the equality holds.\\
For a natural number $ n $, consider $ 2n $ disjoint copies of $K_{2,3}$, and call them by $H_1, H_2, \ldots ,H_{2n}$. Let $H_i = (\{a_i, b_i\},\{ c_i,d_i, e_i\})$. Now, join $d_{2i-1}$ to $d_{2i}$, for $1 \le i \le n$, and join $c_i$ to $e_{i+1}$, modulo $2n$ for $1 \le i \le 2n$. Call this graph by $G$. Consider a nice function $ f $ for $ G $, where $ f(V(G))=\beta(G) $. We prove that for every $ i=1,\ldots,2n $, $ f(V(H_{i}))\leq -1 $. If $ f(a_i)=-1 $, then since $ f(N[b_i])\leq 0 $, we are done. So assume that $ f(a_i)=f(b_i)=1 $. Clearly, $ f(c_i)=f(d_i)=f(e_i)=-1 $. So the proof is complete.
}


\section{Good Decision Number}
In this section, we show that $\gamma(G)\leq \frac{5n}{7}$, for every cubic graph of order $n$.

Now, we present some results on good decision number of cubic graphs.
\begin{thm}
\label{bipartite}
{For every cubic bipartite graph $G$ of order $n,\,\gamma(G)\leq\frac{5n}{7}$.
}
\end{thm}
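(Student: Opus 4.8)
The plan is to exploit the bipartition $V(G) = A \cup B$ of the cubic bipartite graph $G$ together with a counting/averaging argument over a suitable family of good functions. First I would recall that a good function needs $f(N(v)) \ge 1$ for every $v$, and since $d(v) = 3$ this means $f(N(v)) \ge 1$ forces $f(N(v)) = 3$ or $f(N(v)) = 1$ (the sum of three $\pm 1$'s is odd), i.e. at most one neighbor of $v$ may receive value $-1$. So the set $N = f^{-1}(-1)$ is exactly a set of vertices such that every vertex of $G$ has at most one neighbor in $N$; equivalently $N$ induces no two vertices with a common neighbor, and moreover $N$ must be an \emph{independent} set in the "square-like" sense that no vertex sees two members of $N$. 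Minimizing $f(V(G)) = n - 2|N|$ means \emph{maximizing} $|N|$ subject to: every vertex has $\le 1$ neighbor in $N$. Call such a set a \emph{sparse set}; the goal is to show $G$ has a sparse set of size at least $n/7$, which gives $\gamma(G) \le n - 2n/7 = 5n/7$.

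The main step is therefore: every cubic bipartite graph of order $n$ has a sparse set of size $\ge n/7$. Here I would use the bipartite structure. Since $G$ is cubic bipartite, $|A| = |B| = n/2$, and by König's theorem (or Hall's theorem, using $3$-regularity) $G$ has a proper $3$-edge-coloring, hence decomposes into three perfect matchings $M_1, M_2, M_3$. I would try to build a sparse set greedily or via a random/weighting argument: pick vertices into $N$ one side at a time, or consider a random subset of $A$ where each vertex is included independently with probability $p$ and then delete one endpoint of each "conflict" (two chosen vertices of $A$ sharing a neighbor in $B$); the expected surviving size is about $p|A| - (\text{number of conflicts})\cdot p^2$, and since each $b \in B$ contributes $\binom{3}{2} = 3$ potential conflicting pairs the expected number of conflicts is $3 p^2 |B|$, so we retain $\ge (p/2 - 3p^2/2) \cdot n$; optimizing $p = 1/6$ gives roughly $n/24$, which is not yet $n/7$, so a cruder probabilistic bound is insufficient.

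The hard part will be getting the constant up to $1/7$ rather than something worse. I expect the right approach is structural rather than probabilistic: partition (or cover) $V(G)$ into small gadgets on $7$ vertices each — mirroring the extremal configuration $H$ that shows sharpness — and place one $-1$ in each gadget, checking the sparse condition across gadget boundaries. Concretely, I would look for a subgraph decomposition of $G$ (using the three perfect matchings to route things) into pieces of order $7$ in which a designated vertex can safely be assigned $-1$; alternatively, find a large induced forest or a large "$2$-packing"-type object and argue via the bipartition that the relevant conflict graph is sparse enough. The sharpness claim should be handled by exhibiting a cubic bipartite graph built from copies of the order-$7$ gadget $H$ glued along the perfect matchings so that no sparse set exceeds $n/7$, whence $\gamma(G) = 5n/7$ exactly. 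The delicate point throughout is controlling conflicts at the interfaces between gadgets, which is where the cubic \emph{and} bipartite hypotheses must both be used.
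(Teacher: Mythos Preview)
Your reformulation is correct: finding a good function of small weight amounts to finding a set $N\subseteq V(G)$ with $|N|\ge n/7$ such that every vertex has at most one neighbour in $N$. But neither of your two proposed constructions reaches the constant $1/7$, so the proof has a genuine gap. The probabilistic deletion argument, as you yourself compute, yields only about $n/24$; moreover, restricting $N$ to one side of the bipartition is already the wrong move, since in the extremal example (the Heawood graph on $14$ vertices) the two $-1$ vertices lie in \emph{different} parts. Your second idea---decomposing $G$ into copies of a $7$-vertex gadget---cannot work as stated: $7$ is odd while every cubic bipartite graph has even order, and in any case there is no reason an arbitrary cubic bipartite graph admits such a decomposition. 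So the ``hard part'' you flag is not just delicate; your outline contains no mechanism that would close it.

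The idea you are missing is a simple greedy scheme, and this is exactly what the paper does. Repeatedly pick an edge $uv$ with both endpoints still unassigned, set $f(u)=f(v)=-1$, and set $f(w)=+1$ for every still-unassigned $w$ with $\min(d(u,w),d(v,w))\le 2$. Bipartiteness enters precisely here: since $u$ and $v$ lie in opposite parts they have no common neighbour, so putting both into $N$ never creates a vertex with two $-1$ neighbours. A direct count in a cubic graph shows at most $12$ vertices receive $+1$ in each such step, so each step consumes at most $14$ vertices while contributing $2$ to $N$; that is already the ratio $1/7$. Once no unassigned edge remains, every unassigned vertex has all its neighbours assigned, and one finishes off by picking such vertices one at a time (assigning $-1$ to the vertex and $+1$ to its few unassigned distance-$2$ vertices), which only improves the ratio. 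The verification that $f$ is good is then a one-line check that no two $-1$ vertices acquired a common neighbour. This greedy ``ball-packing'' around edges is the substitute for your hoped-for gadget decomposition, and it works on every cubic bipartite graph without any structural hypothesis.
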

\begin{proof}
{Let $G$ have bipartite partition $(U,V)$. We assign values $ -1$ and $ 1 $ to the vertices of $G$ by the following algorithm:\\ \\
Step 1. Consider two adjacent vertices $u$ and $v$ which have no value. Assign $-1$ to both $u$ and $v$. If $w\in V(G)$ has no value and $\min{(d(v,w), d(u,w))}\leq 2 $, then assign $1$ to $w$. Note that $G$ has at most $12$ vertices with this property. Do this procedure as much as possible. \\ \\
Step 2. Let $v\in V(G)$ be an arbitrary vertex which has no value. Assign $-1$ to $v$ and $1$ to each $w\in V(G)$ with no value, where $d(v,w)=2$. Note that there exist at most three vertices with this property. Do this procedure as much as there exists a vertex with no value.

Call this function by $f$. We show that each pair of vertices with value $-1$ have no common neighbor. Suppose that there exists $v\in V(G),$ such that $v$ has two neighbors $u $ and $w$ with value $-1$. In the algorithm, one of $u$ and $w$ received value prior to the other one, say $u$. Since $d(u,w)=2,\, w$ has value $1$, a contradiction. So $f$ is a good function.
It is straightforward to see that $f(V(G))\leq \frac{5n}{7}$. So $\gamma(G)\leq \frac{5n}{7}$ and the proof is complete.
}
\end{proof}

\begin{thm} 
\label{lemma14}
{If for every cubic bipartite graph $ H $, $ \gamma(H)\leq c|V(H)| $, and $ c>0 $, then for every cubic graph $ G $, $ \gamma(G)\leq c|V(G)|. $
}
\end{thm}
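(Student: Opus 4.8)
The plan is to reduce an arbitrary cubic graph $G$ to a cubic \emph{bipartite} graph on the same number of vertices (or close to it), apply the hypothesis $\gamma(H)\le c|V(H)|$ there, and pull the good function back to $G$ without changing $f(V(G))$. The natural device is the bipartite double cover: given $G$ with vertex set $V$, form $H = G\times K_2$ on vertex set $V\times\{0,1\}$, where $(u,i)\sim(v,1-i)$ iff $uv\in E(G)$. Then $H$ is cubic and bipartite with $|V(H)| = 2|V(G)|=2n$, and crucially the neighborhood structure is ``two disjoint copies'' of $G$'s adjacency, so a good function on $H$ restricts nicely. By hypothesis there is a good function $g$ on $H$ with $g(V(H))\le c\cdot 2n$.

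The key step is to extract from $g$ a good function on $G$ of no larger average value. For a vertex $v\in V(G)$ write $v_0=(v,0)$, $v_1=(v,1)$. Note $N_H(v_i) = \{\,u_{1-i} : u\in N_G(v)\,\}$, so $g(N_H(v_0)) = \sum_{u\in N_G(v)} g(u_1)$ and $g(N_H(v_1)) = \sum_{u\in N_G(v)} g(u_0)$. Define $f_0,f_1:V(G)\to\{-1,1\}$ by $f_j(v)=g(v_j)$. Then $f_1(N_G(v)) = \sum_{u\in N_G(v)} g(u_1) = g(N_H(v_0)) \ge 1$ for every $v$, so $f_1$ is a good function for $G$; similarly $f_0$ is a good function for $G$. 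Since $f_0(V(G)) + f_1(V(G)) = g(V(H)) \le 2cn$, at least one of $f_0,f_1$ satisfies $f_j(V(G)) \le cn$, giving $\gamma(G)\le cn = c|V(G)|$, as desired.

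I would write this up in the order: (1) define the bipartite double cover and verify it is cubic and bipartite; (2) record the neighborhood identity $N_H(v_i)=\{u_{1-i}:u\in N_G(v)\}$; (3) invoke the hypothesis on $H$ to get $g$ with $g(V(H))\le 2cn$; (4) define $f_0,f_1$ by restriction, check the good-function inequality via the neighborhood identity, and finish by averaging. The main obstacle — really the only place that needs care — is step (2) and the verification in step (4): one must make sure the closed-vs-open neighborhood is handled correctly (here it is the \emph{open} neighborhood $N(\cdot)$ throughout, and the double cover respects that cleanly) and that $H$ is genuinely cubic even when $G$ has multiple structural features; since $G$ is simple and cubic, $H$ is automatically simple and cubic, so no extra hypotheses are needed. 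A minor point worth a sentence: if one prefers an argument avoiding the double cover when $G$ is already bipartite, nothing is lost — the statement is vacuous there — so the double-cover construction loses nothing.
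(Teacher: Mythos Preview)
Your proof is correct and is essentially identical to the paper's: both take the bipartite double cover $H$ of $G$, apply the hypothesis to get a good function on $H$ with weight at most $2cn$, observe that each of the two layers restricts to a good function on $G$ via the neighborhood identity $N_H(v_i)=\{u_{1-i}:u\in N_G(v)\}$, and finish by choosing the layer with the smaller sum. Your write-up is in fact slightly more careful than the paper's (which contains a typo, writing $|V(H)|=|V(G)|/2$ instead of $|V(H)|=2|V(G)|$).
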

\begin {proof}
{Let $V(G) = \{v_1,\ldots,v_n\}$. We construct a bipartite graph $H$. Consider two copies of vertices of $G$, say $G_{1}$ and $G_{2}$ with vertex sets $ \{v'_{1}, \ldots ,v'_{n}\} $ and $ \{v''_{1}, \ldots ,v''_{n}\} $, respectively. Join $v'_{i}\in V(G_{1})$ to $v''_{j}\in V(G_{2})$, if $v_{i}$ and $v_{j}$ are adjacent. Now, consider a good function $f$ for $H$, where $f(V(H))=\gamma(H).$ With no loss of generality, assume that $f(V(G_{1}))\leq\frac{1}{2}f(V(H)).$ Define a function $g$ as follows: $g(v_i) = f(v'_i),$ for every $i= 1,\ldots ,n$. Clearly, $g$ is a good  function. So, $\gamma(G)\leq\frac{\gamma(H)}{2}$. Since $|V(H)|=|V(G)|/2$, we obtain that $\gamma(G)\leq c|V(G)|.$
}
\end {proof}

Now, by Theorem \ref{bipartite} we have the following corollary:
\begin{cor}
\label{cor 5div7n}
For every cubic graph $G$ of order $ n,\,\gamma(G)\leq\frac{5n}{7}.$
\end{cor}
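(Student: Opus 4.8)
The plan is to deduce Corollary \ref{cor 5div7n} immediately from the two theorems that precede it, namely Theorem \ref{bipartite} (every cubic bipartite graph $H$ satisfies $\gamma(H) \le \frac{5|V(H)|}{7}$) and Theorem \ref{lemma14} (if every cubic bipartite graph $H$ satisfies $\gamma(H) \le c|V(H)|$ for some $c > 0$, then every cubic graph $G$ satisfies $\gamma(G) \le c|V(G)|$). First I would invoke Theorem \ref{bipartite} with the explicit constant $c = \frac57 > 0$; this verifies the hypothesis of Theorem \ref{lemma14}. Then I would apply Theorem \ref{lemma14} with that same $c$ to conclude that every cubic graph $G$ of order $n$ satisfies $\gamma(G) \le \frac{5n}{7}$, which is exactly the assertion.

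Since this is purely a matter of chaining two already-proved statements, there is essentially no obstacle; the only point worth a sentence is making sure the doubling construction in Theorem \ref{lemma14} is applied legitimately. That theorem produces from $G$ a cubic bipartite graph $H$ with $|V(H)| = 2n$ (two disjoint copies of $V(G)$ with the bipartite adjacency mirroring $G$), takes a minimum good function $f$ on $H$, restricts it to the cheaper of the two sides $G_1, G_2$, and observes that the restriction is a good function on $G$ of value at most $\tfrac12\gamma(H) \le \tfrac12 \cdot \frac57 \cdot 2n = \frac{5n}{7}$. So the corollary is simply the specialization of Theorem \ref{lemma14} to the best bipartite constant currently known, and no new argument is needed.

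\begin{proof}
{
By Theorem \ref{bipartite}, every cubic bipartite graph $H$ satisfies $\gamma(H) \le \frac{5}{7}|V(H)|$, so the hypothesis of Theorem \ref{lemma14} holds with $c = \frac{5}{7} > 0$. Applying Theorem \ref{lemma14} with this value of $c$, we conclude that every cubic graph $G$ satisfies $\gamma(G) \le \frac{5}{7}|V(G)|$. In particular, if $|V(G)| = n$, then $\gamma(G) \le \frac{5n}{7}$.
}
\end{proof}
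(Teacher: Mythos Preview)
Your proof is correct and matches the paper's own approach exactly: the paper states the corollary immediately after Theorems \ref{bipartite} and \ref{lemma14} with only the remark ``by Theorem \ref{bipartite},'' leaving the combination of the two results implicit. Your write-up simply makes that chaining explicit.
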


If $G$ is a cubic graph of order $n$ then $\gamma(G) \geq \frac{n}{3}$ (A more generalized version is discussed in \cite{zelinka}). 

\begin{rem}
We present an infinite family of bipartite cubic graphs, $\{G_{i}\}_{i\ge 1}, $ for which $\gamma (G_i) = \frac{5}{7}|V(G_i)|$, where $|V(G_i)| = 14 \times 2^{i-1}.$ Let $G_1$ be the Heawood graph and call the function shown in Figure~\ref{fig:heawood} by $f_1$. 
\begin{figure}[h!]
\centering
  \includegraphics[height=4.5cm]{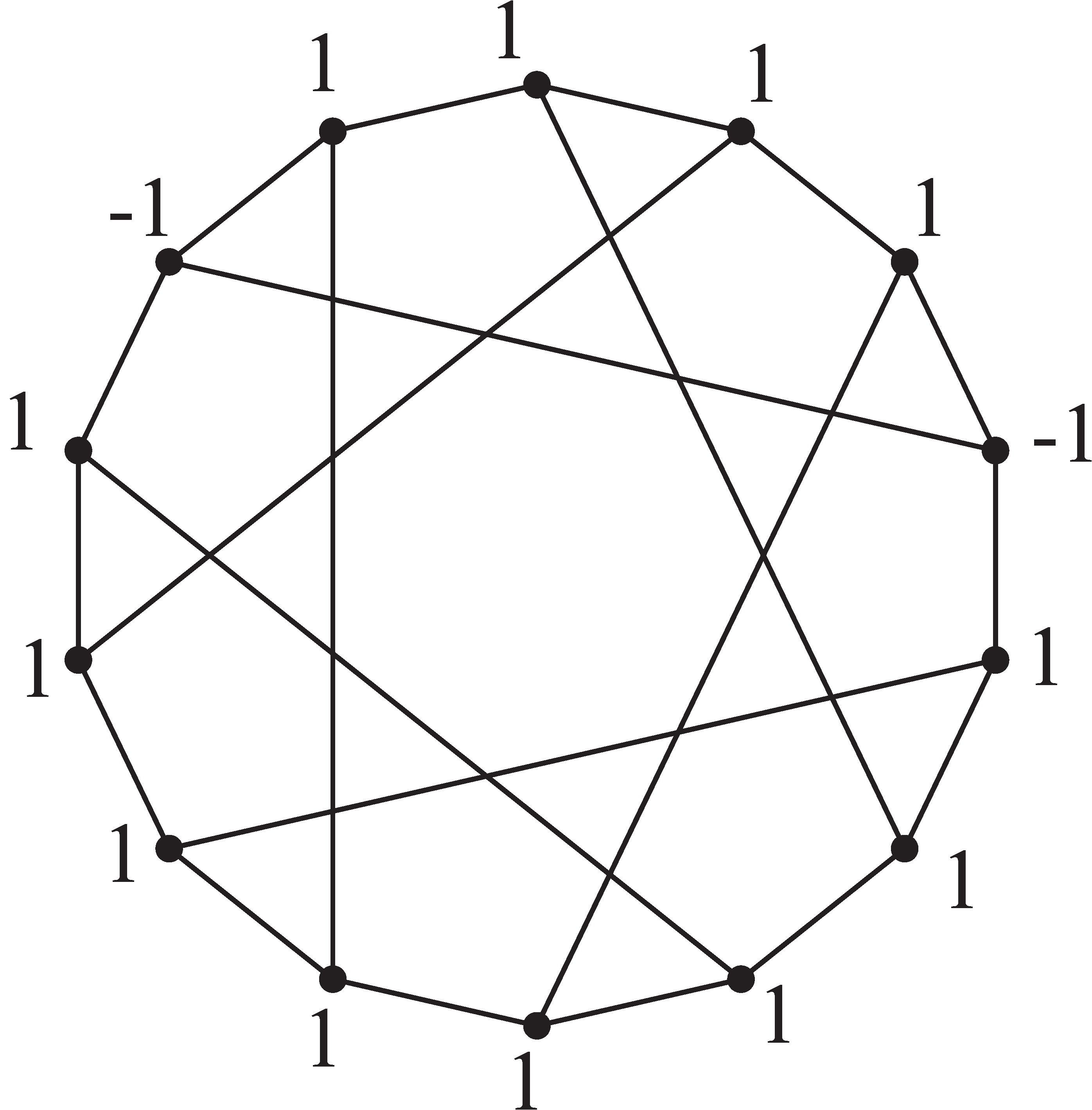}
  \caption{The Heawood graph}
  \label{fig:heawood}
\end{figure}
It is not hard to see that $f_1$ is a good function, and $f_1(V(G_1)) = 10$. For every $ i\geq 2$, construct the bipartite cubic graph $G_i$ as follows:\\
Let $V(G_{i-1})=\{ v_1,\ldots, v_n\}$. Consider two copies of the vertex set of $G_{i-1}$, say $X'_i=$$\{v'_1,\ldots,v'_n\}$ and $X''_i=\{v''_1,\ldots,v''_n\}$. Join $v'_j $ to $v''_k $, if $v_j v_k\in E(G_{i-1})$. Assume that $f_{i-1}$ is a nice function for $G_{i-1}$ such that $f_{i-1}(V(G_{i-1})) = \frac{5}{7}|V(G_{i-1})|$. Define $f_i(v'_i)=f_i(v''_i)=f_{i-1}(v_i)$, for $i=1,\ldots,n$. It is not hard to see that $f_i$ is a good function for $G_i$, and $f_i (V(G_i)) = \frac{5}{7}|V(G_i)|$. Now, assume that $g$ is a good function for $G_i$, where $g(V(G_i))=\gamma (G_i) < \frac{5}{7}|V(G_i)|$. By the Pigeonhole Principle, one of $g(X'_i)$ and $g(X''_i)$ is less than $\frac{5}{14}|V(G_i)|$. Without loss of generality, assume that $g(X'_i)<\frac{5}{14}|V(G_i)|$. Define a function $h$ as follows: $h(v_i)=g(v'_i)$, for every $i=1,\ldots n$. Clearly, $h$ is a good function for $G_{i-1}$. So $\gamma (G_{i-1}) < \frac{5}{7}|V(G_{i-1})|$, a contradiction. Thus $\gamma (G_i) = \frac{5}{7}|V(G_i)|$, for each positive integer $i \geq 1$.

\end{rem}

\section{Excellent Decision Number}

In this section, we show that for every tree $T$ of order $n$, $\overline{\gamma}(T) \ge n - 2\lceil\frac{n-4}{3}\rceil $. We prove that for every cubic graph $G$ of order $n$, except Petersen graph, $\overline{\gamma}(G)\leq \frac{3n}{4}$.

The following theorem has been stated in \cite{henning}.

\begin{thm}
\label{thm15}
For every positive integer $n>1$, $\overline{\gamma}(P_n) = n - 2\lceil\frac{n-4}{3}\rceil $.
\end{thm}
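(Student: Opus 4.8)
The plan is to reduce the condition of being an excellent function on the path $P_n=v_1v_2\cdots v_n$ (with $v_iv_{i+1}\in E(P_n)$) to a pair of elementary combinatorial constraints, and then to solve the resulting extremal problem exactly. First I would record the effect of the two ends: since $N[v_1]=\{v_1,v_2\}$, any excellent $f$ satisfies $f(v_1)+f(v_2)\ge 1$, which forces $f(v_1)=f(v_2)=1$, and symmetrically $f(v_{n-1})=f(v_n)=1$. For an interior vertex $v_i$ with $2\le i\le n-1$, the requirement $f(v_{i-1})+f(v_i)+f(v_{i+1})\ge 1$ is equivalent to the statement that at most one of $v_{i-1},v_i,v_{i+1}$ has value $-1$; equivalently, any two vertices of value $-1$ lie at distance at least $3$ in $P_n$. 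Conversely, a function $f$ obeying (a) $f(v_1)=f(v_2)=f(v_{n-1})=f(v_n)=1$ and (b) the $-1$-vertices are pairwise at distance $\ge 3$ is immediately seen to be excellent, since then each closed neighborhood sums to a value in $\{1,2,3\}$. Hence $\overline{\gamma}(P_n)=n-2k^\ast$, where $k^\ast$ is the maximum number of vertices that can be labelled $-1$ subject to (a) and (b).

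Next I would bound $k^\ast$ from above. Because of (a), every $-1$-vertex occupies a position in $\{3,4,\ldots,n-2\}$. If $n\le 4$ this set is empty, so $k^\ast=0$. If $n\ge 5$ and the $-1$-vertices occupy positions $3\le p_1<p_2<\cdots<p_k\le n-2$ with $p_{j+1}-p_j\ge 3$ for all $j$, then $p_k-p_1\ge 3(k-1)$, so $3(k-1)\le (n-2)-3=n-5$, giving $k\le\frac{n-2}{3}$ and therefore $k\le\lfloor\frac{n-2}{3}\rfloor$. A short check on $n\bmod 3$ shows $\lfloor\frac{n-2}{3}\rfloor=\lceil\frac{n-4}{3}\rceil$, so in all cases $k^\ast\le\lceil\frac{n-4}{3}\rceil$ and hence $\overline{\gamma}(P_n)\ge n-2\lceil\frac{n-4}{3}\rceil$.

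For the matching bound I would exhibit an explicit excellent function attaining it. Put $k_0=\lceil\frac{n-4}{3}\rceil=\lfloor\frac{n-2}{3}\rfloor$, set $f(v_i)=-1$ for $i\in\{3,6,9,\ldots,3k_0\}$, and $f(v_i)=1$ otherwise. Since $3k_0\le n-2$, none of $v_1,v_2,v_{n-1},v_n$ receives $-1$, so (a) holds; and since the chosen positions form an arithmetic progression of common difference $3$, no three consecutive vertices contain two of them, so (b) holds. Thus $f$ is excellent with $f(V(P_n))=n-2k_0=n-2\lceil\frac{n-4}{3}\rceil$, and combining with the previous paragraph gives the stated value. (For $n\in\{2,3,4\}$ the ceiling is $0$, every excellent function is identically $1$, and the formula reads $\overline\gamma(P_n)=n$; this is also where the hypothesis $n>1$ enters, since for $P_1$ the lone vertex must be $1$ while the formula would give $3$.)

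The only genuinely delicate points are bookkeeping ones: keeping track of the forced $+1$ labels at the two ends, so that the admissible positions for $-1$ are $\{3,\ldots,n-2\}$ rather than $\{1,\ldots,n\}$ (this two-position shrinkage at each end is exactly what produces $n-4$ in the answer), and verifying the floor/ceiling identity $\lfloor\frac{n-2}{3}\rfloor=\lceil\frac{n-4}{3}\rceil$ so that the extremal count matches the closed form. Neither is a real obstacle; the heart of the argument is the single equivalence ``$f$ is excellent on $P_n$'' $\iff$ ``the $-1$-vertices are pairwise at distance $\ge 3$ and avoid $v_1,v_2,v_{n-1},v_n$,'' after which the problem becomes a routine packing computation on an interval of $n-4$ integers.
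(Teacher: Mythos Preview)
Your argument is correct and coincides with the paper's own (commented-out) proof: both observe that $f(v_1)=f(v_2)=f(v_{n-1})=f(v_n)=1$ is forced, that among any three consecutive vertices at most one can carry $-1$, and both exhibit the same extremal construction $f(v_i)=-1$ exactly when $i\in\{3,6,\ldots\}\cap\{3,\ldots,n-2\}$. Your write-up is in fact more careful than the paper's sketch---you make the packing bound $k\le\lfloor\tfrac{n-2}{3}\rfloor$ and the identity $\lfloor\tfrac{n-2}{3}\rfloor=\lceil\tfrac{n-4}{3}\rceil$ explicit, whereas the paper's version announces ``we apply induction on $n$'' and then never actually inducts.
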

 \begin{thm}
For every tree $T$ of order $n > 1$, $\overline{\gamma}(T) \ge n - 2\lceil\frac{n-4}{3}\rceil $.
\end{thm}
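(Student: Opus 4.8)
The plan is to mimic the structure already used for trees in the bad decision number section: reduce from a general tree to paths by a local surgery argument, then invoke the exact value $\overline{\gamma}(P_n) = n - 2\lceil\frac{n-4}{3}\rceil$ from Theorem~\ref{thm15}. The key observation is that the quantity $n - 2\lceil\frac{n-4}{3}\rceil$ equals $n$ minus twice the maximum number of vertices one can mark $-1$ so that no three consecutive vertices on a path contain two of them; the heart of the matter is that in \emph{any} tree, an excellent function cannot do better than this path-like bound, because the closed-neighborhood constraint $f(N[v]) \le 1$ forces every $-1$ vertex to be ``isolated'' among $-1$ vertices and moreover forces pendant vertices and their neighbors to receive value $1$.

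First I would set up the easy direction via a construction: exhibit an explicit excellent function on $T$ achieving $n - 2\lceil\frac{n-4}{3}\rceil$, analogous to the path construction $g(v_i) = -1$ iff $i \equiv 0 \pmod 3$ for $3 \le i \le n-2$. Concretely I would root $T$, do a BFS/DFS layering, and greedily assign $-1$ to vertices at appropriate ``depth $\equiv 0$'' positions while keeping leaves and their neighbors at $1$; one must check $f(N[v]) \le 1$ holds everywhere, which amounts to checking no two $-1$'s are within distance two and no closed neighborhood picks up too many $-1$'s — this is the routine part. Then for the upper bound on the number of $-1$ vertices I would argue by induction on $n$: take a longest path $P$ in $T$ with a leaf endpoint $u$, whose neighbor $w$ has all-but-one of its other neighbors being leaves (by the longest-path property). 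The excellent condition forces $f(u) = 1$ and $f(w) = 1$ (since $f(N[u]) \le 1$ with $u$ a leaf of degree $1$ gives $f(u)+f(w) \le 1$, so $f(w) \le 0$, hence $f(w) = -1$ is possible only in degenerate small cases — I'd need to handle this carefully), and then delete a small gadget ($u$, its sibling leaves, and $w$, roughly three vertices worth) to drop to a smaller tree, losing at most two in $f(V(T))$ and reducing $n$ by $3$, matching the recurrence $a_n = a_{n-3} + 1$ hidden in $\lceil\frac{n-4}{3}\rceil$.

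The main obstacle I anticipate is the base-case bookkeeping and the precise local structure at the end of a longest path: unlike the bad-function/path case, the closed-neighborhood constraint interacts with the degree of $w$, so I must rule out $w$ having many leaf-children all wanting value $1$ while $w$ itself could be $-1$ — in fact $f(w) = -1$ together with a leaf child forces $f(N[w]) \le -2 + (\text{stuff})$, which is fine, so the real content is showing the deletion doesn't allow the smaller tree to ``cheat''. I would handle this by instead counting directly: let $S = f^{-1}(-1)$; show (a) $S$ contains no leaf and no leaf's neighbor, and (b) $S$ is a packing in the sense that any path of $3$ consecutive vertices meets $S$ in at most one vertex (equivalently $d(x,y) \ge 3$ for distinct $x,y \in S$ — this follows since if $d(x,y) \le 2$ some vertex $v$ has both in $N[v]$, forcing $f(N[v]) \le -1 < $ wait, need $\le 1$, recheck: two $-1$'s in $N[v]$ plus $v$ itself gives at most $-2 + (d(v)+1-2) \cdot 1$, which can exceed $1$, so distance $\ge 3$ is NOT immediate and I must use the longest-path / leaf structure). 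This subtlety — that the distance-$3$ packing claim needs the tree's leaf structure and isn't purely local — is where I expect to spend the most effort, and it is the reason the bound has the $n-4$ rather than $n$ in the ceiling; the $-4$ accounts for the four ``forced $1$'' vertices near the two ends of a diameter.
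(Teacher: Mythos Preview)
Your proposal has a basic confusion that undermines everything: an \emph{excellent} function satisfies $f(N[v])\ge 1$, not $f(N[v])\le 1$. You have written the nice-function inequality throughout. With the correct sign, the leaf argument is immediate (for a pendant $u$ with neighbor $w$, $f(u)+f(w)\ge 1$ forces $f(u)=f(w)=1$), but your stated reasoning is backwards.

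More seriously, the ``easy direction via a construction'' that you set out to prove first is not part of the theorem and is in fact \emph{false} for general trees. The statement only asserts $\overline{\gamma}(T)\ge n-2\lceil\frac{n-4}{3}\rceil$; the reverse inequality fails badly, e.g.\ for the star $K_{1,n-1}$ every excellent function must assign $1$ to all vertices, so $\overline{\gamma}(K_{1,n-1})=n$, which is much larger than $n-2\lceil\frac{n-4}{3}\rceil$ for large $n$. So your BFS/greedy construction cannot succeed in general, and that half of your plan should be discarded entirely.

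For the direction that actually matters, your induction/packing sketch stalls exactly where you yourself note it does: two $-1$'s at distance~$2$ do \emph{not} violate $f(N[v])\ge 1$ once $d(v)$ is large, so there is no distance-$3$ packing to count. Your fallback of deleting a three-vertex gadget near a longest-path endpoint is a reasonable idea, but you have not specified which three vertices, nor verified that $f$ restricted to the remainder is still excellent (the neighbor of the deleted block may now fail the condition), nor handled the case when $w$ has many leaf children. The paper sidesteps all of this with a different mechanism: it fixes an optimal excellent function $f$ on $T$, then performs a sequence of edge-surgeries (moving $-1$ vertices onto a longest path, separating adjacent $-1$'s by rerouting through off-path $+1$ neighbors, and finally absorbing off-path vertices into the path) that preserve both $n$ and the fact that $f$ is excellent, until the tree becomes $P_n$. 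Then Theorem~\ref{thm15} gives $f(V(T))=f(V(P_n))\ge \overline{\gamma}(P_n)=n-2\lceil\frac{n-4}{3}\rceil$. The key idea you are missing is this \emph{surgery at fixed $f$} rather than induction at varying $n$.
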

\begin{proof}
{Consider an excellent function $f$ for $T$, where $f(V(T))=\overline{\gamma}(T)$. Let $P$ be the longest path in $T$. Clearly, both end vertices of $P$, say $w_1$ and $w_2$, are pendants. Note that $f(x)=1$, for each $x\in N([w_1])\cup N([w_2]).$ \\
Suppose that there exists a vertex $ v \notin V(P)$ with value $-1 $. Let $u\in N(v)$ be the vertex such that if we remove $v$ from $T$, then $u$ is in the connected component containing $P$. Remove the edge $uv$ and join $v$ to $w_1$. Clearly, $f$ induces an excellent function for this new tree. Note that the length of the maximum path is increased. Now, repeat the previous procedure as much as possible.\\
Now, if there exist two adjacent vertices $u, v \in V(P)$ with value $-1$, then $u$ has at least three neighbors with value $1$. So there exist two vertices, $w$ and $w'$ adjacent to $u$ with value $1$, where $w,w' \notin V(P)$. Remove the edge $uv$ and $uw'$ and join both $w$ and $v$ to $w'$. It is not hard to see that $f$ is an excellent function for this new tree. Note that the length of the maximum path is increased. Continue this procedure until all edges whose both endpoints have value $-1$ are removed.\\
Now, suppose that there exists a vertex $u$ in $V(P)$ which has a neighbor, $w \notin V(P)$. Clearly, $f(w) = 1$. Let $v, z\in N(u)\cap V(P)$. At most one of the vertices $u$ and $v$ has value $-1$. Thus, $f(u) + f(v) \ge 0$. Remove the edge $uv$ and join $v$ to $w$. It is not hard to see that $f(N[w])\geq 1$. Clearly, $f(N[v]) \geq 1$. Since $ f(N[u])\geq f(u)+ f(w)+f(z) $ and $ f(u)+f(z)\geq 0 $, so $f(N[u])\geq 1  $. By this algorithm, we obtain a path of order $n$ with the excellent function $f$. By Theorem~\ref{thm15}, the proof is complete.
}
\end{proof}

A $2$-distance coloring of a graph is a coloring of the vertices such that two vertices at distance at most 2 receive distinct colors.
Now, we would like to present an upper bound for the excellent decision number of cubic graphs.

In Theorem $2$ from \cite{favaron}, Favaron proved the following theorem. Here we present a short proof for this result.
\begin{thm}
{
For every cubic graph $G$ of order $n$, $\overline{\gamma}(G)\leq \frac{3n}{4}$, except Petersen graph.
}
\end{thm}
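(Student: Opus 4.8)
The claim is that every cubic graph $G$ of order $n$, other than the Petersen graph, satisfies $\overline{\gamma}(G)\le \frac{3n}{4}$, where we seek an excellent function (one with $f(N[v])\le 1$ for all $v$) of small value. Since an excellent function corresponds to picking a set $N=\{v:f(v)=-1\}$ with $f(V(G))=n-2|N|$, the goal is equivalent to producing an excellent function whose negative set $N$ has size at least $\frac{n}{8}$. The condition $f(N[v])\le 1$ for a cubic graph means: every closed neighborhood $N[v]$ (which has $4$ vertices) contains at least two vertices of $N$ — i.e., $N$ is a set such that every vertex $v$ has $|N[v]\cap N|\ge 2$. Such a set is sometimes called a double total dominating set (or $N$ together with the complement witnesses a "$2$-domination-type" condition on closed neighborhoods). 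The plan is therefore to exhibit such a set of size $\ge n/8$, and the natural tool is a $2$-distance coloring of $G$: the paper has just defined this, which is a strong hint.

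The key steps I would carry out are as follows. First, since $G$ is cubic, the square $G^2$ has maximum degree at most $3+3\cdot 2 = 9$, so by Brooks-type reasoning $G^2$ is $9$-colorable, hence $G$ admits a $2$-distance coloring with at most $9$ color classes — but that is too weak; I would instead invoke the known result (Cranston–Kim, or Thomassen, or Hartke et al.) that every connected cubic graph other than the Petersen graph and $K_{3,3}$ has a $2$-distance coloring with at most $7$ colors (the "$\Delta^2=9$ minus 2" phenomenon for $\Delta=3$), and handle $K_{3,3}$ separately by hand. Given a $2$-distance $7$-coloring, pick the two \emph{largest} color classes $C_1$ and $C_2$; then $|C_1\cup C_2|\ge \frac{2n}{7}$. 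Because each $C_i$ is a $2$-packing (an independent set in $G^2$), no two vertices of $C_1\cup C_2$ of the same color share a neighbor, but two vertices of different colors might. The point of using \emph{two} classes is that every vertex $v$, having $4$ vertices in $N[v]$ distributed among $7$ colors... hmm, this does not immediately give $|N[v]\cap(C_1\cup C_2)|\ge 2$.

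So the real argument is subtler, and here is where I expect the main obstacle. The correct approach is: I want a set $N$ meeting every closed neighborhood in $\ge 2$ vertices and of size $\ge n/8$; equivalently the complement $P=V\setminus N$ meets every $N[v]$ in $\le 2$ vertices and has size $\le 7n/8$. A cleaner route is to find a small "$2$-packing-complement": take a maximal $2$-packing $D$ (a set with pairwise distance $\ge 3$); then $D$ dominates $G$, $|D|\le$ (something), and more usefully the classes of a good $2$-distance coloring partition $V$ into $2$-packings. I would then argue: from a $2$-distance $7$-coloring with classes $C_1,\dots,C_7$ (sizes summing to $n$), for each vertex $v$ the $4$ vertices of $N[v]$ lie in $4$ \emph{distinct} classes (that is the defining property of $2$-distance coloring: vertices within distance $2$, in particular the two neighbors of $v$ and $v$ itself, get distinct colors). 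Hence if I \emph{delete} the $3$ smallest classes and set $N=$ union of the remaining $4$ largest classes, then $|N[v]\setminus N|\le 3$, so $|N[v]\cap N|\ge 1$ — still not $2$. To force $\ge 2$ I delete only the $2$ smallest classes: $N=$ union of $5$ largest classes has $|N[v]\cap N|\ge 4-2=2$, and $|N|\ge \frac{5n}{7}$. But then $f(V(G))=n-2|N|\le n-\frac{10n}{7}<0$, far better than $\frac{3n}{4}$ — so in fact the bound $\frac{3n}{4}$ should follow comfortably, and the only delicate points are (a) justifying the $7$-color $2$-distance coloring for all cubic graphs except Petersen (citing/quoting the theorem, treating $K_{3,3}$ and small disconnected cases directly), and (b) checking the edge case where $7\nmid n$ so that "two smallest classes" might together be smaller than $2n/7$ — but since the $5$ largest classes always cover at least $\lceil 5n/7\rceil\ge n/2\ge 3n/8$ vertices this is fine. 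I would write: choose $N$ to be the union of the five largest color classes of an optimal $2$-distance coloring; verify $f\equiv -1$ on $N$, $+1$ elsewhere is excellent; then $\overline{\gamma}(G)\le n-2|N|\le n - 2\cdot\frac{5n}{7}<\frac{3n}{4}$, with the Petersen graph excluded precisely because it is the unique cubic graph needing $8$ colors. The main obstacle, then, is correctly invoking and attributing the $2$-distance chromatic bound and cleanly disposing of $K_{3,3}$ and any disconnected/small exceptions by an explicit excellent function.
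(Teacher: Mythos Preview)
Your proposal rests on a misreading of the definition. In this paper an \emph{excellent} function is the closed-neighborhood analogue of a \emph{good} function, so it requires $f(N[v])\ge 1$ for every $v$, not $f(N[v])\le 1$. For a cubic graph this means the negative set $N=\{v:f(v)=-1\}$ meets each closed neighborhood in \emph{at most one} vertex; equivalently, $N$ must be a $2$-packing (pairwise distance at least $3$). Your candidate $N$, the union of five color classes of a $2$-distance $7$-coloring, is essentially never a $2$-packing: vertices from different classes may be adjacent or share a neighbor. Thus the function you build is not excellent, and the inequality $\overline{\gamma}(G)\le f(V(G))$ does not follow. A sanity check: summing $f(N[v])\ge 1$ over all $v$ gives $4f(V(G))\ge n$, so every excellent function on a cubic graph has value at least $n/4$; actually the packing count $4|N|\le n$ yields $f(V(G))\ge n/2$. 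Your negative conclusion (``far better than $3n/4$'') should have flagged the error.

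The paper's argument does use the $2$-distance coloring idea, but in the correct direction. It invokes the Cranston--Rabern result that every connected cubic graph other than the Petersen graph has a $2$-distance coloring with $8$ colors, takes the single \emph{largest} color class $A$ (so $|A|\ge n/8$), and sets $f=-1$ on $A$ and $+1$ elsewhere. Since $A$ is one color class it is a $2$-packing, so each $N[v]$ contains at most one vertex of $A$, giving $f(N[v])\ge 4-2=2$; hence $f$ is excellent and $\overline{\gamma}(G)\le n-2|A|\le 3n/4$. Once you fix the direction of the inequality, your instinct to use a $2$-distance coloring is exactly right---you just need one class rather than five, and eight colors (not seven) suffice for the stated bound.
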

\begin{proof}
{
By Main Theorem in \cite{maincoloring} we know that, if $G$ is a connected graph with maximum degree $3$ and $G$ is not the Petersen
graph, then there is a $2$-distance coloring of $G$ with $8$ colors.
Let $A$ be the largest color class. Assign $-1$ and $1$ to the vertices of $A$ and $V(G) \setminus A$, respectively. Call this function by $f$. Obviously, $f$ is an excellent function, and $f(V(G))\leq \frac{3n}{4}$. Thus, $\overline{\gamma}(G)\leq \frac{3n}{4}$ and the proof is complete.
 }
\end{proof}
As a result of Theorem $12$ in \cite{kang}, $\overline{\gamma}(G)\leq$ $\frac{5n}{7}$, for every cubic graph $G$ of order $n$. In the following remark, we show that this bound is sharp. 

\begin{rem}
\label{remark2}
We introduce an infinite family of planar cubic graphs of order $n$, whose excellent decision number is $\frac{5n}{7}$. Call the graph shown in Figure~\ref{example gamma bar} by $H$, and let $H'$ be the graph shown in Figure~\ref{examplehprime}. It is not hard to see that $\overline{\gamma}(H) = 5$. Let $F$ be a graph with the vertex set $V(H)\cup V(H')$ and the edge set $E(H)\cup E(H')\cup \{vu'\}$. It is straightforward to see that $\overline{\gamma}(F) = 10$. Consider the disjoint union of $s$ copies of this graph and call them by $F_1, \ldots, F_s$. Let $x'_i, y_i,u_i,w_i \in V(F_i)$, $1\le i\le s$, be the corresponding vertices of $x'\in V(H')$ and $y,u,w\in V(H)$. Now, join $u_j$ to $w_{j+1}$, modulo $s$, $1\le j\le s$. Call this graph by $G$.
Define the function $f$ as follows: 
$$
f(z)= \left\{
\begin{array}{lll}
-1&& z=x'_i, y_i\mbox{, for } i=1,\ldots, s\\
1&&\mbox{otherwise.}\\
\end{array}
\right.
$$
Clearly, $f$ is an excellent function and $f(V(G))= \frac{5}{7}|V(G)|.$ Consider an excellent function $g$ for $G$. Note that $g(N_{G}(z))\geq 2$, for every $z\in V(G)$. If $d_{F_i}(z) = 2$, for some $i$, $1\leq i \leq s$, then $g(N_{F_i}(z))\geq 1.$ Thus the restriction of $g$ to $F_i$ is an excellent function, for $1\leq i \leq s.$ So we obtain that $\overline{\gamma}(G)=\frac{5}{7}|V(G)|$.
\begin{figure}[h!]
\centering
  \includegraphics[height=2.5cm]{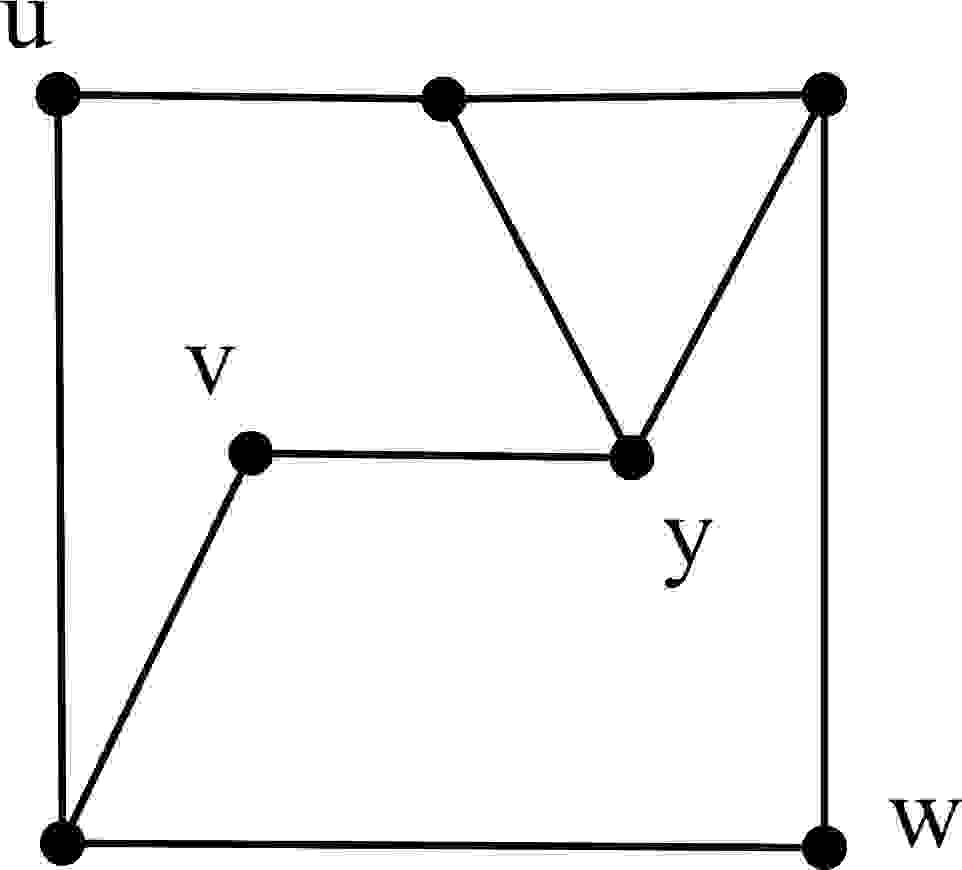}
  \caption{Graph $H$ }
  \label{example gamma bar}
\end{figure}
\begin{figure}[h!]
\centering
  \includegraphics[height=2.5cm]{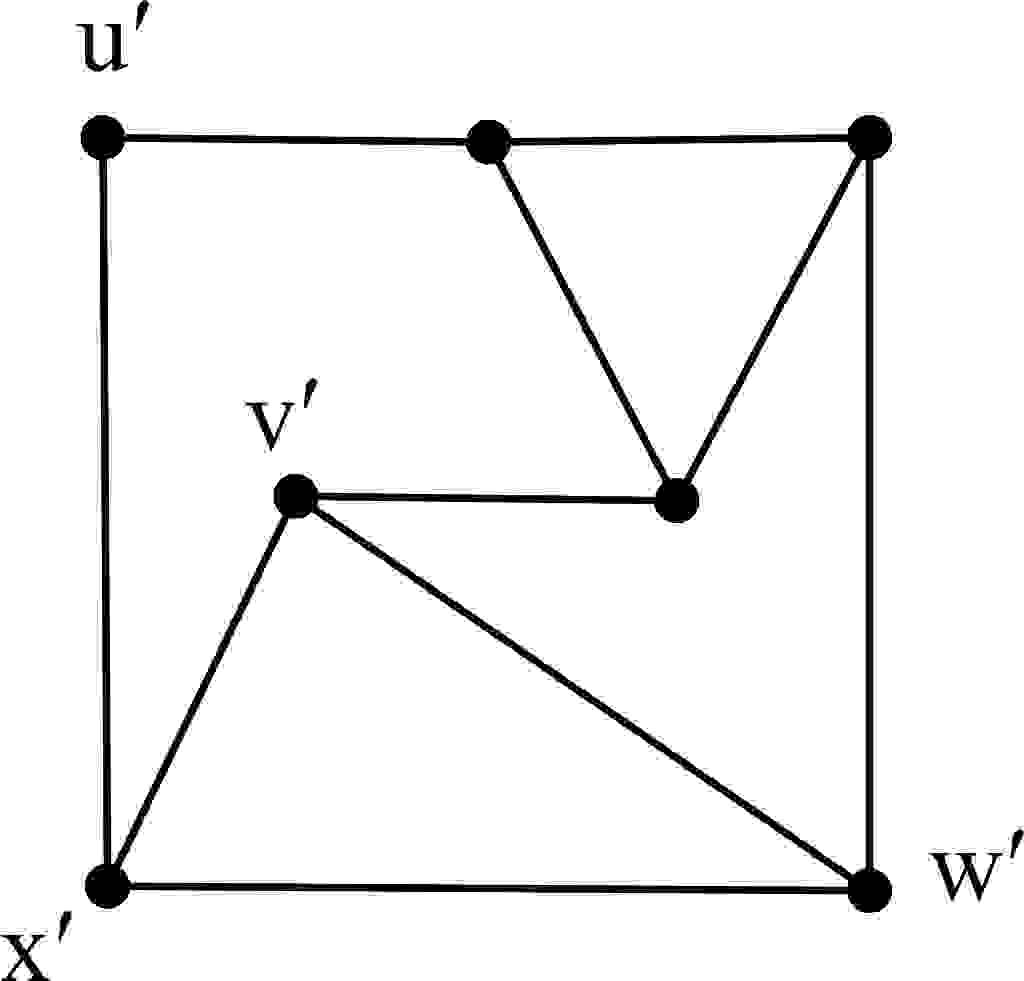}
  \caption{Graph $H'$ }\textbf{}
  \label{examplehprime}
\end{figure}
\begin{figure}[h!]
\centering
  \includegraphics[height=2.7cm]{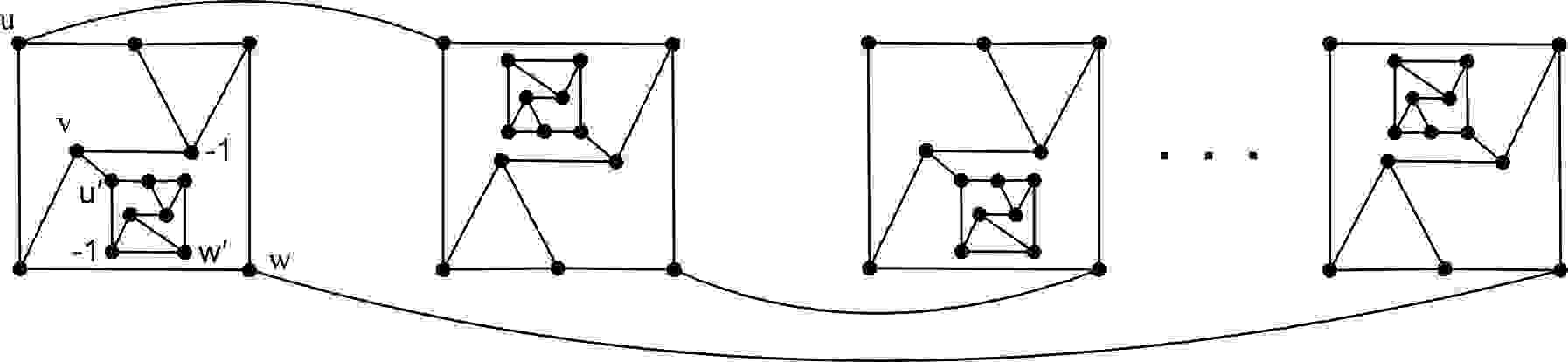}
  \caption{Graph $G$ constructed from $F$}
  \label{fig:example L}
\end{figure}
\end{rem}

\section{Computational Results}
Using a computer search, we obtain some results for decision number of small graphs, see \href{https://github.com/ehsanik/decision_number}{here}. Define $m(G)= |\{H\,|\, \theta(H)=\min\,\theta(G)\mbox{ over all } G\}|$ and $M(G)= |\{H\,|\, \theta(H)=\max\,\theta(G)\mbox{ over all } G\}|$, where $\theta\in\{\beta, \overline{\beta}, \gamma, \overline{\gamma}\}$. We denote $n$, $t(n)$ and $c(n)$ the order of graphs, the number of trees of order $n$ and the number of cubic graphs of order $n$, respectively. \\ \\
\begin{table}[htbp!]
  \centering

    \begin{tabular}{|c|c|c|c|c|c|}
    \hline
    $n$ & $t(n)$ & $\min \, \beta(G)$ & $m(G)$  & $\max \, \beta(G)$ & $M(G)$ \\
    \hline
    4     & 2     & 0     & 1     & 2     & 1 \\
    5     & 3     & 1     & 3     & 1     & 3 \\
    6     & 6     & 0     & 1     & 2     & 5 \\
    7     & 11    & 1     & 6     & 3     & 5 \\
    8     & 23    & 0     & 3     & 4     & 7 \\
    9     & 47    & 1     & 14    & 5     & 6 \\
    10    & 106   & 0     & 4     & 6     & 7 \\
    11    & 235   & 1     & 36    & 7     & 4 \\
    12    & 551   & 0     & 11    & 8     & 3 \\
    13    & 1301  & 1     & 97    & 9     & 1 \\
    14    & 3159  & 0     & 21    & 10    & 1 \\
    15    & 7741  & 1     & 276   & 9     & 96 \\
    16    & 19320 & 0     & 57    & 10    & 86 \\
    17    & 48629 & 1     & 810   & 11    & 70 \\
    \hline
    \end{tabular}%
    \caption{Bad decision number for trees}
  \label{tab:addlabel}%
\end{table}%

$$
$$

\begin{table}[htbp!]
  \centering
  
    \begin{tabular}{|c|c|c|c|c|c|}
    \hline
    $n$ & $t(n)$ & $\min \, \overline{\beta}(G)$ & $m(G)$  & $\max \, \overline{\beta}(G)$ & $M(G)$ \\
    \hline
    4     & 2     & 0     & 2     & 0     & 2 \\
    5     & 3     & 1     & 3     & 1     & 3 \\
    6     & 6     & 0     & 3     & 2     & 3 \\
    7     & 11    & 1     & 10    & 3     & 1 \\
    8     & 23    & 0     & 8     & 4     & 1 \\
    9     & 47    & 1     & 33    & 3     & 14 \\
    10    & 106   & 0     & 19    & 4     & 9 \\
    11    & 235   & 1     & 122   & 5     & 5 \\
    12    & 551   & 0     & 58    & 6     & 2 \\
    13    & 1301  & 1     & 471   & 7     & 1 \\
    14    & 3159  & 0     & 177   & 6     & 54 \\
    15    & 7741  & 1     & 1888  & 7     & 27 \\
    16    & 19320 & 0     & 612   & 8     & 13 \\
    17    & 48629 & 1     & 7771  & 9     & 4 \\
    \hline
    \end{tabular}%
    \caption{Nice decision number for trees}
  \label{tab:addlabel}%
\end{table}%
$$
$$
\begin{table}[htbp!]
  \centering

    \begin{tabular}{|c|c|c|c|c|c|}
    \hline
    $n$ & $t(n)$ & $\min \, \gamma(G)$ & $m(G)$  & $\max \, \gamma(G)$ & $M(G)$ \\
    \hline
    4     & 2     & 2     & 1     & 4     & 1 \\
    5     & 3     & 3     & 2     & 5     & 1 \\
    6     & 6     & 2     & 2     & 6     & 1 \\
    7     & 11    & 3     & 5     & 7     & 2 \\
    8     & 23    & 2     & 3     & 8     & 2 \\
    9     & 47    & 3     & 11    & 9     & 4 \\
    10    & 106   & 2     & 6     & 10    & 6 \\
    11    & 235   & 3     & 28    & 11    & 9 \\
    12    & 551   & 2     & 11    & 12    & 15 \\
    13    & 1301  & 3     & 67    & 13    & 25 \\
    14    & 3159  & 2     & 23    & 14    & 42 \\
    15    & 7741  & 3     & 171   & 15    & 70 \\
    16    & 19320 & 2     & 47    & 16    & 123 \\
    17    & 48629 & 3     & 433   & 17    & 213 \\
    \hline
    \end{tabular}%
    \caption{Good decision number for trees}
  \label{tab:addlabel}%
\end{table}%
$$
$$
\begin{table}[htbp!]
  \centering

    \begin{tabular}{|c|c|c|c|c|c|}
    \hline
    $n$ & $t(n)$ & $\min \, \overline{\gamma}(G)$ & $m(G)$  & $\max \, \overline{\gamma}(G)$ & $M(G)$ \\
    \hline
    4     & 2     & 4     & 2     & 4     & 2 \\
    5     & 3     & 3     & 1     & 5     & 2 \\
    6     & 6     & 4     & 2     & 6     & 4 \\
    7     & 11    & 5     & 6     & 7     & 5 \\
    8     & 23    & 4     & 1     & 8     & 10 \\
    9     & 47    & 5     & 4     & 9     & 14 \\
    10    & 106   & 6     & 16    & 10    & 27 \\
    11    & 235   & 5     & 1     & 11    & 43 \\
    12    & 551   & 6     & 7     & 12    & 82 \\
    13    & 1301  & 7     & 42    & 13    & 140 \\
    14    & 3159  & 6     & 1     & 14    & 269 \\
    15    & 7741  & 7     & 12    & 15    & 486 \\
    16    & 19320 & 8     & 99    & 16    & 939 \\
    17    & 48629 & 7     & 1     & 17    & 1765 \\
    \hline
    \end{tabular}%
    \caption{Excellent decision number for trees}
  \label{tab:addlabel}%
\end{table}%
$$
$$
\begin{table}[htbp!]
  \centering
  
    \begin{tabular}{|c|c|c|c|c|c|}
    \hline
    $n$ & $c(n)$ & $\min \, \beta(G)$ & $m(G)$  & $\max \, \beta(G)$ & $M(G)$ \\
    \hline
    4     & 1     & 0     & 1     & 0     & 1 \\
    6     & 2     & 2     & 2     & 2     & 2 \\
    8     & 5     & 0     & 2     & 2     & 3 \\
    10    & 14    & 2     & 14    & 2     & 14 \\
    12    & 57    & 0     & 1     & 4     & 31 \\
    14    & 341   & 2     & 120   & 4     & 221 \\
    16    & 2828  & 0     & 2     & 4     & 2805 \\
    18    & 30468 & 2     & 82    & 6     & 8166 \\
    \hline
    \end{tabular}%
    \caption{Bad decision number for cubic graphs}
  \label{tab:addlabel}%
\end{table}%
$$
$$
\begin{table}[htbp!]
  \centering
  
    \begin{tabular}{|c|c|c|c|c|c|}
    \hline
    $n$ & $c(n)$  & $\min \, \overline{\beta}(G)$ & $m(G)$  & $\max \, \overline{\beta}(G)$ & $M(G)$ \\
    \hline
    4     & 1     & 0     & 1     & 0     & 1 \\
    6     & 2     & -2    & 2     & -2    & 2 \\
    8     & 5     & -2     & 1     & 0    & 4 \\
    10    & 14    & -2    & 14    & -2    & 14 \\
    12    & 57    & -2    & 34    & 0     & 23 \\
    14    & 341   & -2    & 341   & -2    & 341 \\
    16    & 2828  & -2    & 2299  & 0     & 529 \\
    18    & 30468 & -2    & 30468 & -2    & 30468 \\
    \hline
    \end{tabular}%
    \caption{Nice decision number for cubic graphs}
  \label{tab:addlabel}%
\end{table}%

\begin{table}[htbp!]
  \centering
    \begin{tabular}{|c|c|c|c|c|c|}
    \hline
    $n$ & $c(n)$  & $\min \, \gamma(G)$ & $m(G)$  & $\max \, \gamma(G)$ & $M(G)$ \\
    \hline
   4     & 1     & 2     & 1     & 2     & 1 \\
    6     & 2     & 2     & 2     & 2     & 2 \\
    8     & 5     & 4     & 5     & 4     & 5 \\
    10    & 14    & 4     & 8     & 6     & 6 \\
    12    & 57    & 4     & 31    & 8     & 1 \\
    14    & 341   & 6     & 338   & 10    & 1 \\
    16    & 2828  & 6     & 1718  & 8     & 1110 \\
    18    & 30468 & 6     & 8166  & 10    & 121 \\
    \hline
    \end{tabular}%
    \caption{Good decision number for cubic graphs}
  \label{tab:addlabel}%
\end{table}%
\begin{table}[htbp!]
  \centering
  
    \begin{tabular}{|c|c|c|c|c|c|}
    \hline
    $n$ & $c(n)$  & $\min \, \overline{\gamma}(G)$ & $m(G)$  & $\max \, \overline{\gamma}(G)$ & $M(G)$ \\
    \hline
     4     & 1     & 2     & 1     & 2     & 1 \\
    6     & 2     & 4     & 2     & 4     & 2 \\
    8     & 5     & 4     & 3     & 6     & 2 \\
    10    & 14    & 6     & 13    & 8     & 1 \\
    12    & 57    & 6     & 25    & 8     & 32 \\
    14    & 341   & 8     & 335   & 10    & 6 \\
    16    & 2828  & 8     & 795   & 10    & 2033 \\
    18    & 30468 & 10    & 29692 & 12    & 776 \\
    \hline
    \end{tabular}%
    \caption{Excellent decision number for cubic graphs}
  \label{tab:addlabel}%
\end{table}%

\newpage
 
\end{document}